\documentclass[11pt,reqno]{amsart}
\usepackage[T1]{fontenc}
\usepackage{lmodern}
\usepackage{amsmath,amssymb,amsthm,mathtools}
\usepackage[margin=1.08in]{geometry}
\usepackage{microtype}
\usepackage{enumitem}
\usepackage{needspace}
\usepackage{xcolor}
\usepackage{comment}

\usepackage{algorithm}
\usepackage{algpseudocode}
\algrenewcommand{\algorithmicrequire}{\textbf{Input:}}
\algrenewcommand{\algorithmicensure}{\textbf{Output:}}

\usepackage[colorlinks=true,linkcolor=blue!45!black,citecolor=blue!45!black,urlcolor=blue!45!black]{hyperref}
\hypersetup{pdftitle={A deterministic algorithm for signing bipartite graphs at the Ramanujan bound},pdfauthor={}}
\newtheorem{theorem}{Theorem}[section]
\newtheorem{lemma}[theorem]{Lemma}
\newtheorem{proposition}[theorem]{Proposition}
\newcommand{\R}{\mathbb R}
\newcommand{\norm}[1]{\left\lVert#1\right\rVert}
\newcommand{\Fill}{\textsc{Fill}}
\newtheorem{definition}[theorem]{Definition}

\newtheorem{question}[theorem]{Question}
\numberwithin{equation}{section}
\setlist[enumerate]{leftmargin=*,itemsep=4pt,topsep=6pt}
\title[Deterministic bipartite signings]{A deterministic polynomial-time algorithm for the
Bilu--Linial signing problem on bipartite graphs}
\author{Zhiqiang Xu}
\thanks{Zhiqiang Xu is supported by National Natural Science Foundation of China (Grant No. 12471361, 12288201).}
\address{State Key Laboratory of Mathematical Sciences, Academy of Mathematics and Systems Science, Chinese Academy of Sciences, Beijing 100190, China;    School of Mathematical Sciences, University of Chinese Academy of Sciences, Beijing 100049, China. }
\email{xuzq@lsec.cc.ac.cn}
\date{}

\begin{document}
\begin{abstract}
We give a deterministic polynomial-time algorithm for the
Bilu--Linial signing problem on bipartite graphs.
For every finite simple bipartite graph of maximum degree
at most an integer $\Delta\ge3$, the algorithm assigns signs
to its edges so that the signed adjacency matrix has operator
norm strictly less than $2\sqrt{\Delta-1}$.
Our algorithm builds on the randomized recursive repair
framework of Jadbabaie, Saberi, and Sra~\cite{JSS26},
with deterministic rules for sign selection and vertex deletion.
\end{abstract}
\maketitle

\section{Introduction}

\subsection{The signing problem}
All graphs in this paper are finite, undirected, and simple unless
explicitly stated otherwise. For a graph $G=(V,E)$, where $V$ is its
vertex set and $E$ its edge set, a \emph{signing} is a map
$\sigma:E\to\{-1,1\}$. The associated signed adjacency matrix
$\mathbf{A}_\sigma$ is indexed by $V$ and has entries
\[
 (\mathbf{A}_\sigma)_{uv}=
 \begin{cases}
   \sigma(\{u,v\}),&\{u,v\}\in E,\\
   0,&\{u,v\}\notin E.
 \end{cases}
\]
We write $\norm{\cdot}$ for the operator norm induced by the Euclidean
norm. For a real symmetric matrix, this is the largest absolute value
of an eigenvalue. Matrices and vectors are denoted by bold letters;
their indices and scalar coordinates are not bold. The superscript
${}^{\mathsf T}$ denotes transpose.

Bilu and Linial~\cite{BL06} connected graph signings with $2$-lifts.
A $2$-lift replaces each vertex by two copies. A positive edge joins
corresponding copies of its endpoints, and a negative edge joins
opposite copies. The lift's eigenvalues consist, with multiplicity,
of those of the original adjacency matrix and those of
$\mathbf{A}_\sigma$~\cite[Lemma~3.1]{BL06}. Thus choosing the edge
signs controls the new eigenvalues of the lift.
Bilu and Linial~\cite[Conjecture~3.1]{BL06} conjectured that every
$d$-regular graph, with $d\ge2$, admits an edge signing $\sigma$ such that
\(
\|\mathbf{A}_\sigma\|\le 2\sqrt{d-1}.
\)

Marcus, Spielman, and Srivastava~\cite{MSS15} proved this conjecture
for bipartite graphs using interlacing families.
More generally, their argument shows that every nonempty graph
admits a signing whose largest eigenvalue is at most the largest
root of its matching polynomial; see the proof
of~\cite[Theorem~5.3]{MSS15}.
Together with the Heilmann--Lieb bound recalled
in~\cite[Theorem~3.2]{MSS15}, this yields an upper bound
$2\sqrt{\Delta-1}$ when the maximum degree is at most an integer
$\Delta\ge3$.
For bipartite graphs, the signed spectrum is symmetric about zero,
so the same signing satisfies
\(
\|\mathbf{A}_\sigma\|\le 2\sqrt{\Delta-1}.
\)

Consequently, every $d$-regular bipartite Ramanujan graph admits
a $2$-lift that is again a $d$-regular bipartite Ramanujan graph.
Starting from the complete bipartite graph $K_{d,d}$ and iterating
such lifts therefore gives an infinite family of $d$-regular
bipartite Ramanujan graphs for every $d\ge3$,
as established in~\cite[Theorem~5.5]{MSS15}.
To implement this construction efficiently, one needs an efficient
method for finding the required signings.
This motivates the following computational question.

\begin{question}\label{que:1}
Given a finite simple bipartite graph of maximum degree at most an
integer $\Delta\ge3$, can one find in deterministic polynomial time
an edge signing whose signed adjacency matrix has operator norm
at most $2\sqrt{\Delta-1}$?
\end{question}

\Needspace{20\baselineskip}
\subsection{Main result}
We answer Question~\ref{que:1} affirmatively with a deterministic
polynomial-time algorithm for signing the prescribed bipartite
graph. The algorithm is described in Algorithm~\ref{alg:signing}.
We measure running time in bit operations, with integers
represented in binary. The following theorem states the strict
spectral bound and the running-time guarantee.
\begin{theorem}\label{thm:main}
Let $\Delta\ge3$ be an integer, and let $G=(V,E)$ be a finite
simple bipartite graph with $V=\{1,\ldots,n\}$, $n\ge2$,
and maximum degree at most $\Delta$.
Algorithm~\ref{alg:signing} deterministically computes a signing
$\sigma:E\to\{-1,1\}$ whose signed adjacency matrix
$\mathbf{A}_\sigma$ satisfies
\begin{equation}\label{eq:main}
 \norm{\mathbf{A}_\sigma}<2\sqrt{\Delta-1}.
\end{equation}
The algorithm can be implemented using exact rational arithmetic
with
\[
 O\!\left(n^{10}
   \bigl(\log(n+1)+\log(\Delta+1)\bigr)^2\right)
\]
bit operations, assuming that $\Delta$ is given in binary and
$G$ is represented by an explicit vertex list and edge list,
or by an adjacency matrix.
\end{theorem}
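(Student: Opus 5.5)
We will prove Theorem~\ref{thm:main} by making the recursive repair framework of Jadbabaie, Saberi, and Sra~\cite{JSS26} deterministic. The invariant is a \emph{certificate of strict norm bound}. Write $\rho=2\sqrt{\Delta-1}$ and fix a rational threshold $\theta$ slightly above the matching-polynomial root but still below $\rho$; this root is below $\rho$ for finite graphs by Heilmann--Lieb. For a partial signing and a barrier value $u\in(\theta,\rho)$, the certificate asserts that $u\mathbf{I}-\mathbf{A}_\sigma$ is positive definite and that a potential $\Phi(u)=\operatorname{tr}\,(u\mathbf{I}-\mathbf{A}_\sigma)^{-1}$, or a resolvent-entry analogue, stays below a prescribed budget. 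By bipartiteness the spectrum of $\mathbf{A}_\sigma$ is symmetric, so a one-sided bound $\lambda_{\max}(\mathbf{A}_\sigma)<u$ already gives $\norm{\mathbf{A}_\sigma}<u<\rho$. Only the upper barrier therefore needs to be tracked.

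The algorithm builds $G$ vertex by vertex in a fixed order. When a vertex $v$ is added, its incident edges to already placed vertices receive signs, and the new matrix is a rank-one-type bordering of the old one. By the Schur complement, the new matrix satisfies $\lambda_{\max}<u$ if and only if $u-\mathbf{s}^{\mathsf T}(u\mathbf{I}-\mathbf{A}_\sigma)^{-1}\mathbf{s}>0$, where $\mathbf{s}$ is the signed incidence vector of $v$. The key averaging step is as follows. Over uniformly random signs, the expectation of $\mathbf{s}^{\mathsf T}\mathbf{R}\mathbf{s}$ with $\mathbf{R}=(u\mathbf{I}-\mathbf{A}_\sigma)^{-1}$ equals $\sum_{w\sim v}R_{ww}$, because the off-diagonal terms cancel. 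Fixing signs one edge at a time by the method of conditional expectations keeps the quantity no larger than this average. The diagonal entries $R_{ww}$ are then controlled by the tree-like recursion behind the Ramanujan bound: on the $\Delta$-regular tree, $R_{ww}\le 1/\sqrt{\Delta-1}$ at $u=\rho$, so that $(\Delta-1)\cdot R_{ww}\le u-R_{ww}^{-1}$. This gives the required invariant, provided every already placed vertex keeps a stored bound $R_{ww}\le r(u)$, where $r$ is the fixed point of $r=1/(u-(\Delta-1)r)$.

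The main obstacle is maintaining the diagonal bound $R_{ww}\le r$ for \emph{old} vertices once new vertices are attached. Unlike the tree case, cycles let the resolvent grow at previously processed vertices. This is where we use the vertex-deletion (repair) rule of~\cite{JSS26}. When a choice of signs would make some old diagonal resolvent entry exceed its budget, we do not fail. Instead we recursively remove a vertex, chosen deterministically as the one maximizing the offending quantity, solve the smaller instance, and reinsert the vertex, using a potential argument to show the recursion terminates. I would first attempt a potential of the form $\sum_w f(R_{ww})$ that decreases in expectation under random signing and is monotone under deletion by Cauchy interlacing. Derandomization then follows by conditional expectations, since every conditional expectation is an explicit rational function of resolvent entries. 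Showing that the deletion never needs to occur more than $O(n)$ times per insertion is the delicate point. If it fails, I would instead slowly increase $u$ toward $\rho$ in $n$ small steps, absorbing the losses, and use strictness $\theta<\rho$ for the slack.

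For running time, each step needs a matrix inverse over $\mathbb{Q}$ of size at most $n$. The recursion depth and the number of branch calls contribute polynomial factors, totalling about $n^{10}$ arithmetic-weighted work. The bit lengths of the rationals stay $O(n\log(n\Delta))$ because $u$ is chosen with $O(\log(n\Delta))$ bits and we use Bareiss-style fraction-free elimination. This yields the stated $O\bigl(n^{10}(\log(n+1)+\log(\Delta+1))^2\bigr)$ bound, once the count of $O(n^{4})$ inversions, each costing $O(n^{3})$ operations on numbers of the stated size, is checked against the recursion structure.
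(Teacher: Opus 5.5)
You have the right local ingredients: the Schur-complement test for bordering by a new vertex, the conditional-expectation choice of signs, and the reduction to a one-sided bound via bipartite symmetry. Your sign choice is exactly the paper's greedy rule, which gives $q_v(\mathbf{s})\le\rho_0^{-2}\sum_{u\in U_v}\lambda_u$.

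\textbf{The main gap.} What is missing is the heart of the theorem: a proof that the delete-and-reinsert recursion terminates after polynomially many updates. You call this ``the delicate point.'' However, the potential $\sum_w f(R_{ww})$ is never specified or shown to decrease, and the fallback of slowly raising $u$ is not carried out.

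The invariant you want to protect is also the wrong one. A uniform bound $R_{ww}\le r(u)$ on all placed vertices is not preserved under attachment. Every $R_{ww}$ with $(\mathbf{R}\mathbf{z})_w\ne0$ grows by $(\mathbf{R}\mathbf{z})_w^2/(u-\mathbf{z}^{\mathsf T}\mathbf{R}\mathbf{z})$, where $\mathbf{z}$ is the zero-padded sign vector. Worse, for your barrier $u\in(\theta,\rho)$ the fixed point $r(u)$ does not exist. The equation $(\Delta-1)r^2-ur+1=0$ has no real root when $u^2<4(\Delta-1)$. The infinite-tree recursion therefore leaves no slack to pay for repairs.

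\textbf{What the paper does instead.}
\begin{enumerate}
\item \emph{Invariant.} The only invariant is admissibility, $\mathbf{Q}_K=\mathbf{I}-\mathbf{A}_K/\rho_0\succ0$. No bound on old vertices is kept.
\item \emph{Proactive deletion.} Before inserting $v$ at depth $\jmath_0$, the algorithm inspects only the active neighbors $u\in U_v$. It deletes one with $\lambda_u=(\mathbf{Q}_K^{-1})_{uu}\ge\gamma_0C_{\jmath_0+1}$ and reinserts it by a child call.
\item \emph{Finite-depth weights.} The thresholds come from a finite backward recurrence: $C_n=\gamma_0$, $C_j=\gamma_0/(1-\gamma_0C_{j+1}/4)$ for $2\le j<n$, and $C_1$ uses $\Delta$ in place of $\Delta-1$. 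This is legitimate because stack vertices are distinct and inactive. Hence the depth is at most $n$, and a child's vertex has at most $\Delta-1$ active neighbors, since its parent is inactive. With $\gamma_0=1+1/(64n^2)$ the iterates stay below $2\gamma_0$ for $n$ steps, even though the recurrence has no fixed point. This bounded depth is where the slack comes from.
\item \emph{Potential.} Let $\kappa$ be the number of outer calls not yet started, and set $\Phi=C_1^{\kappa}\prod_{i\le\jmath_0}C_i/\det\mathbf{Q}_K$. A deletion multiplies $\det\mathbf{Q}_K$ by $\lambda_{u_*}\ge\gamma_0C_{\jmath_0+1}$. An insertion multiplies it by $1-q_v(\mathbf{s})\ge\gamma_0/C_{\jmath_0}$. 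So $\Phi$ falls by at least a factor $\gamma_0$ at every update. Since $\mathbf{Q}_K$ has unit diagonal, AM--GM gives $\det\mathbf{Q}_K\le1$, so $\Phi\ge1$. This yields fewer than $130n^3$ updates.
\end{enumerate}
Without such a count, your $O(n^4)$ inversion count, and with it the $n^{10}$ bound, is unsupported.

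\textbf{Smaller points.}
\begin{enumerate}
\item The matching-polynomial threshold $\theta$ plays no constructive role.
\item The value $1/\sqrt{\Delta-1}$ is the resolvent entry on a $(\Delta-1)$-ary branch, not on the $\Delta$-regular tree. On the $\Delta$-regular tree it is $\sqrt{\Delta-1}/(\Delta-2)$, which is why $C_1$ is defined separately from the other weights.
\item The paper does not need a rational barrier. It keeps $\rho_0$ exact and computes the needed block as $\rho_0^2\mathbf{M}^{-1}$, with the integer matrix $\mathbf{M}=\rho_0^2\mathbf{I}-\mathbf{B}\mathbf{B}^{\mathsf T}$.
\end{enumerate}
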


The graph $G$ need not be regular or connected.
The stated bit-complexity bound is obtained from a direct
implementation that recomputes the required matrix data whenever
they are needed. We do not claim that the exponent $10$ is optimal.

There exist nonbipartite graphs for which no edge signing yields
a signed adjacency matrix with operator norm at most
$2\sqrt{\Delta-1}$.
Specifically, in earlier work~\cite[Theorem~1.1]{Xu26}, the author
constructed a finite connected simple $3$-regular graph such that
every edge signing produces a signed adjacency matrix with operator
norm strictly greater than $2\sqrt{2}$.
Thus the bipartite assumption in Theorem~\ref{thm:main}
cannot be omitted.

The proof of Theorem~\ref{thm:main} does not rely on the results
of Marcus, Spielman, and Srivastava in
\cite{MSS15}.
It therefore gives an alternative constructive proof of the
Bilu--Linial signing conjecture for bipartite graphs.

\subsection{Related work}

Marcus, Spielman, and Srivastava~\cite[Theorem~1.1]{MSS18}
proved that, for every even integer $n\ge2$ and every integer
$\Delta\ge3$, there exists a $\Delta$-regular bipartite
Ramanujan multigraph on $n$ vertices.
Their proof uses interlacing families and finite free
convolutions to analyze unions of random perfect matchings.
Building on this framework, Cohen~\cite{Cohen16} obtained
a deterministic polynomial-time construction by efficiently
computing the conditional expected characteristic polynomials
needed to select the matchings.
These results concern graphs whose edges are chosen during
the construction.
Our algorithm instead takes a prescribed simple bipartite
graph $G=(V,E)$ of maximum degree at most $\Delta$ and
deterministically assigns signs to its existing edges in
polynomial time, ensuring
$\norm{\mathbf{A}_\sigma}<2\sqrt{\Delta-1}$ without requiring
regularity or connectedness.

Jadbabaie, Saberi, and Sra~\cite[Theorem~6.1]{JSS26} give a
Las Vegas algorithm for signing a prescribed graph.
On bipartite inputs, their randomized algorithm terminates
almost surely and always returns a signing
satisfying~\eqref{eq:main}. It admits an implementation using
exact rational arithmetic with expected polynomial bit
complexity~\cite[Section~6.6]{JSS26}.
They also suggest conditional expectations as a possible route
to derandomization, but do not pursue
it~\cite[Section~7]{JSS26}.

Our algorithm builds on their recursive repair framework and
local determinant identities, and adapts the finite recurrence
used in their analysis~\cite[Section~6.5]{JSS26}.
By modifying both the sign-selection and deletion rules,
we obtain a fully deterministic algorithm that achieves the
same spectral bound for prescribed bipartite graphs.
The resulting bit-complexity guarantee is polynomial in the
worst case, rather than in expectation.

\subsection{Organization}
Section~\ref{sec:ingredients} introduces the notation and states
the auxiliary results needed to describe and analyze the algorithm.
Section~\ref{sec:algorithm} presents the algorithm and proves
Theorem~\ref{thm:main}.
Section~\ref{sec:details} proves the auxiliary results stated
in Section~\ref{sec:ingredients}, including those concerning
the exact implementation of the local computations and the
associated bit-complexity bounds.
Section~\ref{sec:conc} discusses the complexity in the
unit-cost arithmetic model and possible directions for future research.

\section{Notation and lemmas for the algorithm and its analysis}\label{sec:ingredients}
For the remainder of the proof, fix an input $G=(V,E)$ and $\Delta$
as in Theorem~\ref{thm:main}, with $n=|V|\ge2$. Vertex labels fix all
orders and tie-breaking choices. Set
\begin{equation}\label{eq:parameters}
 \rho_0:=2\sqrt{\Delta-1},\qquad
 \gamma_0:=1+\frac{1}{64n^2}.
\end{equation}
The first constant is the spectral threshold and the second will
measure progress at each insertion or deletion.

\subsection{The active state}
\begin{definition}[Active states and admissibility]
An \emph{active state} is a pair $(K,\sigma)$, where $K\subseteq V$
and $\sigma$ assigns one sign to every edge of the induced graph
$G[K]$. We call $K$ the active vertex set.
Thus only edges with both endpoints in $K$ currently carry
signs. Let $\mathbf{A}_K$ be the signed adjacency matrix of this
state and define
\begin{equation}\label{eq:state}
 \mathbf{Q}_K=\mathbf{I}_K-\frac{\mathbf{A}_K}{\rho_0},
 \qquad D_K=\det\mathbf{Q}_K,
\end{equation}
where $\mathbf{I}_K$ is the identity matrix indexed by $K$.
The state $(K,\sigma)$ is \emph{admissible} if $\mathbf{Q}_K\succ 0$.
For an admissible state and $u\in K$, put
\begin{equation}\label{eq:leverage}
 \lambda_u=(\mathbf{Q}_K^{-1})_{uu}.
\end{equation}
The empty state is admissible. Its determinant is $1$, and its
inverse is the empty matrix.

\end{definition}
All these quantities depend on the current signing $\sigma$,
although only $K$ appears in the notation.
After a recursive call reinserts a deleted vertex, the vertex
set $K$ is the same as before the deletion, but $\sigma$ may
have changed. Thus the matrices $\mathbf{A}_K$ and $\mathbf{Q}_K$,
as well as the determinant $D_K$, may differ from their values
before the deletion.
For bipartite states, admissibility is equivalent to
\begin{equation}\label{eq:spectralconstraint}
 \norm{\mathbf{A}_K}<\rho_0.
\end{equation}
For $K=\varnothing$, we take $\norm{\mathbf{A}_K}=0$.

\subsection{Deleting a vertex and choosing insertion signs}
Recall that $K\subseteq V$ denotes the current set of active
vertices, and $\sigma:E(G[K])\to\{-1,1\}$ denotes the current
edge signing on the induced subgraph $G[K]$.
Deleting an active vertex $u\in K$ removes $u$ from $K$ and
erases its incident signs, while retaining all signs on
$G[K\setminus\{u\}]$.
Inserting an inactive vertex $v\in V\setminus K$ adds $v$
to $K$, retains all existing signs, and assigns signs to
the edges joining $v$ to its neighbors already in $K$.
Both operations modify only the temporary state $(K,\sigma)$ and 
the input graph $G$ remains fixed.
The next two lemmas quantify the effects of these operations.

For an admissible state $(K,\sigma)$ and a vertex $v\notin K$,
write $U_v=\{u\in K:\{u,v\}\in E\}$ for its active neighbors.
The dependence of $U_v$ on the current set $K$ is suppressed
in the notation.
A vector $\mathbf{s}=(s_u)_{u\in U_v}\in\{-1,1\}^{U_v}$
specifies the signs on the edges joining $v$ to $U_v$.
When inserting $v$ into $K$, we extend $\sigma$ by setting
$\sigma(\{v,u\})=s_u$ for every $u\in U_v$, while retaining
all signs on $G[K]$.
 Define the insertion quadratic form by
\begin{equation}\label{eq:qdef}
 q_v(\mathbf{s})=
 \frac{1}{\rho_0^2}\cdot \mathbf{s}^{\mathsf T}
      (\mathbf{Q}_K^{-1})[U_v,U_v]\mathbf{s}.
\end{equation}
Here $[U_v,U_v]$ denotes restriction to the rows and columns indexed
by $U_v$, in label order. The form is zero when $U_v$ is empty.

\begin{lemma}\label{lem:determinants}
For every admissible state, $0<D_K\le1$. Deleting $u\in K$
preserves admissibility and gives
\begin{equation}\label{eq:delete}
 D_{K\setminus\{u\}}=D_K\cdot \lambda_u.
\end{equation}
Here, \(D_K\) and \(\lambda_u\) are defined in (\ref{eq:state}) and  (\ref{eq:leverage}), respectively. For $v\notin K$, inserting $v$ with incident sign vector
$\mathbf{s}$ is admissible if and only if $q_v(\mathbf{s})<1$.
In that case,
\begin{equation}\label{eq:insert}
 D_{K\cup\{v\}}=D_K\cdot \bigl(1-q_v(\mathbf{s})\bigr).
\end{equation}
The determinants on the left refer to the states obtained by the
specified deletion or insertion, respectively.
\end{lemma}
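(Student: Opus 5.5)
The whole lemma reduces to two standard facts about positive definite matrices: Cauchy interlacing for principal submatrices, and the Schur complement formula for determinants. We treat the three claims in turn.

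\emph{Bound on $D_K$.} For an admissible state, $\mathbf{Q}_K\succ0$, so every eigenvalue of $\mathbf{Q}_K$ is positive and $D_K>0$. For the upper bound we use that $\mathbf{A}_K$ has zero diagonal, so every diagonal entry of $\mathbf{Q}_K$ equals $1$. Hadamard's inequality for positive definite matrices then gives $D_K\le\prod_{u\in K}(\mathbf{Q}_K)_{uu}=1$. Equivalently, AM--GM on the eigenvalues gives $D_K\le(\operatorname{tr}\mathbf{Q}_K/|K|)^{|K|}=1$. The empty state has $D_\varnothing=1$ by convention.

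\emph{Deletion.} Put $K'=K\setminus\{u\}$. Deleting $u$ retains all signs on $G[K']$, so $\mathbf{Q}_{K'}$ is exactly the principal submatrix of $\mathbf{Q}_K$ obtained by removing row and column $u$. A principal submatrix of a positive definite matrix is positive definite, so admissibility is preserved. By the cofactor formula for the inverse,
\[
(\mathbf{Q}_K^{-1})_{uu}=\frac{\det\mathbf{Q}_{K'}}{\det\mathbf{Q}_K},
\]
which rearranges to \eqref{eq:delete}. When $K=\{u\}$ we have $\mathbf{Q}_K=(1)$, so $\lambda_u=1$ and $D_\varnothing=1$, consistent with the formula.

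\emph{Insertion.} Order $K\cup\{v\}$ with $v$ last. The signed adjacency matrix extends $\mathbf{A}_K$ by the column $\mathbf{b}$, where $b_u=s_u$ for $u\in U_v$ and $b_u=0$ otherwise, together with the diagonal entry $0$. Hence
\[
\mathbf{Q}_{K\cup\{v\}}=\begin{pmatrix}\mathbf{Q}_K & -\mathbf{b}/\rho_0\\ -\mathbf{b}^{\mathsf T}/\rho_0 & 1\end{pmatrix}.
\]
Since $\mathbf{Q}_K\succ0$, this block matrix is positive definite if and only if its Schur complement
\[
1-\rho_0^{-2}\,\mathbf{b}^{\mathsf T}\mathbf{Q}_K^{-1}\mathbf{b}
\]
is positive. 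Because $\mathbf{b}$ is supported on $U_v$, we have $\mathbf{b}^{\mathsf T}\mathbf{Q}_K^{-1}\mathbf{b}=\mathbf{s}^{\mathsf T}(\mathbf{Q}_K^{-1})[U_v,U_v]\mathbf{s}$, so the Schur complement equals $1-q_v(\mathbf{s})$. This gives the admissibility criterion $q_v(\mathbf{s})<1$. The block determinant formula then yields $D_{K\cup\{v\}}=D_K\,(1-q_v(\mathbf{s}))$, which is \eqref{eq:insert}. If $U_v=\varnothing$, then $\mathbf{b}=0$, $q_v=0$, and the claim is immediate. If $K=\varnothing$, the same computation applies with empty blocks.

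None of these steps presents a real obstacle. The only points needing care are that the orderings and labels match the restriction $[U_v,U_v]$, and that deletion leaves the remaining signs unchanged, so that the principal-submatrix identification is legitimate. Bipartiteness is not needed for this lemma.
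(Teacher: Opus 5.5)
Your proof is correct and follows the paper's argument essentially step for step. Deletion uses the principal-submatrix observation and the cofactor formula for $(\mathbf{Q}_K^{-1})_{uu}$. The bound $0<D_K\le1$ comes from AM--GM on the eigenvalues with $\operatorname{tr}\mathbf{Q}_K=|K|$. Insertion uses the Schur complement $1-q_v(\mathbf{s})$, which you cite as a standard fact, whereas the paper derives it by hand via the substitution $\mathbf{y}=\mathbf{x}-(a/\rho_0)\mathbf{Q}_K^{-1}\mathbf{z}$ and a unit-triangular congruence $\mathbf{T}$.
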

\begin{lemma}[Greedy sign selection]\label{lem:signs}
Let $(K,\sigma)$ be admissible and let $v\in V\setminus K$.
List the neighbors of $v$ currently in $K$ as
$U_v=\{u_1<\cdots<u_\ell\}$, where $\ell=|U_v|$.
Choose the coordinates of the sign vector
$\mathbf{s}=(s_u)_{u\in U_v}$ successively by
\begin{equation}\label{eq:greedy}
 s_{u_i}=
 \begin{cases}
 -1,&\displaystyle
       \sum_{k=1}^{i-1}(\mathbf{Q}_K^{-1})_{u_i u_k}s_{u_k}>0,\\
 +1,&\text{otherwise},
 \end{cases}
 \qquad 1\le i\le\ell.
\end{equation}
Then
\begin{equation}\label{eq:qtrace}
 q_v(\mathbf{s})\le
 \frac{1}{\rho_0^2}\sum_{u\in U_v}\lambda_u.
\end{equation}
Given the inverse entries appearing in~\eqref{eq:greedy},
this rule uses $O(\ell^2)$ arithmetic operations and comparisons.
Empty sums are zero and if $U_v=\varnothing$, no signs are chosen.
\end{lemma}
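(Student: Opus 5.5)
Write $\mathbf{M}=(\mathbf{Q}_K^{-1})[U_v,U_v]$, with entries $M_{ik}=(\mathbf{Q}_K^{-1})_{u_iu_k}$. The plan is to expand the quadratic form $q_v(\mathbf{s})$ one coordinate at a time, in the same order in which the greedy rule chooses the signs. Since $\mathbf{Q}_K\succ0$, the inverse is symmetric positive definite, and $M_{ii}=\lambda_{u_i}$. Define the partial sums
\[
 T_i=\sum_{j,k\le i}M_{jk}s_{u_j}s_{u_k}, \qquad T_0=0 .
\]
Using symmetry of $\mathbf{M}$ and $s_{u_i}^2=1$, we get the telescoping identity
\[
 T_i-T_{i-1}=\lambda_{u_i}+2s_{u_i}\sum_{k=1}^{i-1}M_{ik}s_{u_k}.
\]
Then $\rho_0^2\,q_v(\mathbf{s})=T_\ell=\sum_{i=1}^{\ell}(T_i-T_{i-1})$.

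The key step is to show that the cross term $2s_{u_i}\sum_{k<i}M_{ik}s_{u_k}$ is nonpositive for every $i$. This is exactly what rule~\eqref{eq:greedy} guarantees. If the sum $c_i=\sum_{k<i}M_{ik}s_{u_k}$ is positive, then $s_{u_i}=-1$ and the product $s_{u_i}c_i$ is negative. Otherwise $c_i\le0$, $s_{u_i}=+1$, and the product is $c_i\le0$. The cross term involves only previously chosen signs, so the sequential definition is consistent.

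Summing the increments gives $T_\ell\le\sum_i\lambda_{u_i}$. Dividing by $\rho_0^2$ yields~\eqref{eq:qtrace}. The case $U_v=\varnothing$ is trivial, since both sides are $0$.

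For the operation count, step $i$ computes $c_i$ with $i-1$ multiplications (which are really sign flips) and $i-2$ additions, followed by one comparison. The total over all steps is $\sum_i O(i)=O(\ell^2)$.

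There is no real obstacle here. The only point needing care is to state the identity for $T_i-T_{i-1}$ correctly. It uses the symmetry of $\mathbf{Q}_K^{-1}$, which holds because $\mathbf{A}_K$ is symmetric, and it uses that the diagonal entries are exactly the $\lambda_u$ from~\eqref{eq:leverage}.
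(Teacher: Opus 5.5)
Your proposal is correct and follows essentially the same route as the paper: you split $q_v(\mathbf{s})$ into the diagonal terms $\lambda_{u_i}$ plus cross terms grouped by the later index. The greedy rule then makes each group $s_{u_i}\sum_{k<i}(\mathbf{Q}_K^{-1})_{u_iu_k}s_{u_k}$ nonpositive, and the operation count $\sum_i O(i)=O(\ell^2)$ matches the paper's. Your telescoping partial sums $T_i$ are just a repackaging of the paper's single expanded identity.
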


Recall that each coordinate $s_u$ specifies the sign of the edge $\{v,u\}$.
Thus, when inserting $v$ into $K$, we extend the current signing
by setting $\sigma(\{v,u\})=s_u$ for every $u\in U_v$,
while retaining all signs on $G[K]$.
The key point is that the inverse diagonal entries $\lambda_u$
control the determinant changes under both insertion and deletion.

\subsection{Weights for the algorithm}

We introduce positive weights $C_1,\ldots,C_n$ to determine
when the algorithm deletes a vertex from $K$ and when it
inserts a vertex into $K$.
These weights specify the deletion thresholds and the bounds
on the determinant ratios needed to prove termination.
The algorithm in Section~\ref{sec:algorithm} explains how
the appropriate weight is selected at each step.
The following lemma ensures that all weights are well defined
and can be computed exactly in polynomial time.

\begin{lemma}[Depth weights]\label{lem:weights}
Define $C_1,\ldots,C_n$ backward by
\begin{align}
 C_n&=\gamma_0,\label{eq:Cterminal}\\
 C_{j}&=\frac{\gamma_0}{1-\gamma_0C_{j+1}/4}
       &&(2\le j<n),\label{eq:Cinternal}\\
 C_1&=\frac{\gamma_0}{1-\Delta\gamma_0C_2/\rho_0^2}.
       \label{eq:Croot}
\end{align}
All denominators are positive, and
\begin{equation}\label{eq:Cbound}
 \gamma_0\le C_{j}<2\gamma_0\quad(2\le j\le n),
 \qquad \gamma_0\le C_1<5\gamma_0.
\end{equation}
Each weight can be represented as a ratio of positive integers,
each having
$O(n(\log(n+1)+\log(\Delta+1)))$ bits.
All these representations can be computed in
\(
 O\!\left(n^3
   \bigl(\log(n+1)+\log(\Delta+1)\bigr)^2\right)
\)
bit operations.
\end{lemma}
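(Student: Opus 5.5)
The proof has two parts. First, a backward induction gives the bounds \eqref{eq:Cbound}, and with them positivity of the denominators. Second, a bit-size estimate follows the recursion in exact rational arithmetic.

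\emph{Bounds.} We prove by backward induction on $j$, for $2\le j\le n$, that $\gamma_0\le C_j<2\gamma_0$. The base case $C_n=\gamma_0$ is immediate. For the inductive step, assume $\gamma_0\le C_{j+1}<2\gamma_0$. Then
\[
\gamma_0C_{j+1}/4<\gamma_0^2/2 .
\]
Since $\gamma_0=1+\frac{1}{64n^2}\le 1+\frac1{256}$, we have $\gamma_0^2<1.01$. Hence the denominator satisfies
\[
1-\gamma_0C_{j+1}/4>1-0.505>0 .
\]
This denominator is also at most $1$, so $C_j\ge\gamma_0$. For the upper bound we need $C_j<2\gamma_0$, which is equivalent to
\[
1-\gamma_0C_{j+1}/4>\tfrac12 ,
\]
that is, $\gamma_0C_{j+1}<2$. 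This holds because $\gamma_0C_{j+1}<2\gamma_0^2$ would not suffice directly, but the sharper estimate $\gamma_0 C_{j+1}<2\gamma_0^2\cdot\frac12$ does not hold either. So I use the crude bound again: $\gamma_0C_{j+1}<2\gamma_0^2\approx 2.008$. This is too weak, so I sharpen the inductive hypothesis to $C_{j+1}\le\gamma_0/(1-\gamma_0^2/2)$.

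A cleaner route is to track the fixed point of $x\mapsto\gamma_0/(1-\gamma_0x/4)$. Its smaller root is
\[
x^*=\frac{2\bigl(1-\sqrt{1-\gamma_0^2}\bigr)}{\gamma_0} ,
\]
which requires $\gamma_0\le1$ and therefore fails here. Instead, I compute directly. If $C_{j+1}<2\gamma_0$, then
\[
C_j<\frac{\gamma_0}{1-\gamma_0^2/2}.
\]
This is roughly $2\gamma_0$ when $\gamma_0\approx1$, but slightly larger, so the naive induction does not close. I therefore strengthen the hypothesis to $C_{j+1}\le c$ with $c$ chosen satisfying
\[
\frac{\gamma_0}{1-\gamma_0c/4}\le c .
\]
Setting $c=1.9$ (or $c=2\gamma_0-\epsilon$), this inequality becomes $4\gamma_0\le 4c-\gamma_0c^2$. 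At $\gamma_0\le 1.004$ and $c=1.9$ the right-hand side is $7.6-3.62=3.98$, while the left-hand side is $4.016$, so $c=1.9$ fails. The inequality $\gamma_0c^2-4c+4\gamma_0\le0$ has real roots only if $4\le 4\gamma_0^2$, i.e.\ always for $\gamma_0\ge1$, but the discriminant $16-16\gamma_0^2$ is then negative for $\gamma_0>1$. So no invariant interval exists, and the iterates grow slowly.

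The main obstacle is therefore quantitative. I will use the fact that there are only $n$ steps and that $\gamma_0-1=\frac{1}{64n^2}$, and show by induction that
\[
C_{n-k}\le \frac{2(k+1)}{k+2}\,\gamma_0^{a_k}
\]
for a suitable exponent $a_k=O(k)$. The reason is that the recursion $x\mapsto 1/(1-x/4)$ at $\gamma_0=1$ is the $2(k+1)/(k+2)$ sequence converging to $2$ at rate $1/k$. The perturbation per step is a factor $1+O(n^{-2})$, and its accumulation over $n$ steps is swallowed by the gap $2-2(k+1)/(k+2)\ge 2/(n+1)$. I will verify this step with a Möbius-transformation and linearization estimate.

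\emph{Root weight.} Given $C_2<2\gamma_0$, we have
\[
\frac{\Delta\gamma_0C_2}{\rho_0^2}<\frac{2\Delta\gamma_0^2}{4(\Delta-1)}\le\frac{3\gamma_0^2}{4}
\]
for $\Delta\ge3$. Hence the denominator of $C_1$ exceeds $1-0.76>1/5$, which yields $C_1<5\gamma_0$. The lower bound $C_1\ge\gamma_0$ is immediate since this denominator is at most $1$.

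\emph{Bit sizes.} Write $\gamma_0=N/M$ with $N=64n^2+1$ and $M=64n^2$, and write $C_j=p_j/q_j$. Then
\[
C_j=\frac{4Mq_{j+1}\,N/M}{4Mq_{j+1}-Np_{j+1}}\cdot\frac1{M}\cdot M ,
\]
so the pair $(p_j,q_j)$ is obtained from $(p_{j+1},q_{j+1})$ by an integer $2\times2$ linear map with entries of $O(\log n)$ bits. Consequently the bit length grows by $O(\log n)$ per step. For $C_1$, use $\rho_0^2=4(\Delta-1)$, which is an integer, so that step adds $O(\log n+\log\Delta)$ bits. Denominators stay positive by the first part. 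Each of the $n$ steps multiplies $O(n\log(n\Delta))$-bit integers by $O(\log(n\Delta))$-bit integers; with schoolbook arithmetic and a final gcd reduction, this costs at most $O(n^2\log^2(n\Delta))$ bit operations per step, for a total of $O(n^3\log^2)$.
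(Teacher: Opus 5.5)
\textbf{The gap.} The missing piece is the one nontrivial part of the lemma: the bound $C_j<2\gamma_0$ for $2\le j\le n$. The rest of your proposal depends on it. Positivity of the internal denominators uses $C_{j+1}<2\gamma_0$, and your estimate for $C_1$ uses $C_2<2\gamma_0$.

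You correctly find that induction with the hypothesis $C_{j+1}<2\gamma_0$ does not close. You also correctly note that for $\gamma_0>1$ the map $x\mapsto\gamma_0/(1-\gamma_0x/4)$ has no invariant interval, so a finite-horizon argument is needed. But at that point you only announce a bound $C_{n-k}\le\frac{2(k+1)}{k+2}\gamma_0^{a_k}$ with an unspecified $a_k=O(k)$. Its proof is deferred to a ``M\"obius-transformation and linearization estimate'' that is never given.

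Neither the inductive step nor the deduction of $C_{n-k}<2\gamma_0$ is carried out. The latter needs $\gamma_0^{a_k-1}<(k+2)/(k+1)$, which depends on the constant hidden in $a_k=O(k)$, not only on its order. As written, the proposal does not prove the lemma. The exploratory passage before this also contains self-contradictory claims, e.g.\ about when $\gamma_0c^2-4c+4\gamma_0$ has real roots; replace it with the actual argument.

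\textbf{How to close it.} Your idea is sound, and the missing step is short. Put $x_k=2(k+1)/(k+2)$, so that $1-x_k/4=1/x_{k+1}$ and $x_kx_{k+1}/4=(k+1)/(k+3)$. Take $a_k=2k+1$. If $C_{n-k}\le x_k\gamma_0^{2k+1}$ and $t=\gamma_0^{2k+2}-1$, then
\[
 1-\frac{\gamma_0C_{n-k}}{4}\ge\frac{1}{x_{k+1}}\Bigl(1-\frac{k+1}{k+3}\,t\Bigr)>0 .
\]
Moreover $\bigl(1-\frac{k+1}{k+3}t\bigr)^{-1}\le1+t$ as soon as $t\le2/(k+1)$. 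This holds because $t\le e^{(2k+2)/(64n^2)}-1<2/n$. Hence $C_{n-k-1}\le x_{k+1}\gamma_0^{2k+3}$, which closes the induction. Then $C_{n-k}<2\gamma_0$ follows from $\gamma_0^{2k}<1+1/(k+1)$.

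The paper instead uses the additive comparison sequence $b_h=2-1/(h+1)$. It lies strictly above the $\gamma_0=1$ orbit $2-2/(h+2)$ for $h\ge1$. The paper closes the induction with the identity
\[
 \frac{4(1-1/b_{h+1})}{b_h}=1+\frac{1}{(2h+1)(2h+3)}>1+\frac{1}{4n^2}>\gamma_0^2 .
\]
Either route works. Once the bound is in place, your treatment of $C_1$ and of the bit sizes agrees with the paper's.
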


For $n=2$, the middle recurrence is empty.
The weights depend only on $n$ and $\Delta$ and remain fixed
throughout the computation.

\subsection{The cost of a local decision}

The deletion test and the greedy rule require only the entries
$(\mathbf{Q}_K^{-1})_{uw}$ with $u,w\in U_v$.
Although other entries of $\mathbf{Q}_K^{-1}$ may be irrational,
these required entries are rational.
Indeed, all vertices in $U_v$ belong to the bipartition class
opposite to $v$. As shown in the proof below, the corresponding
block of $\mathbf{Q}_K^{-1}$ can be computed using only
$\rho_0^2=4(\Delta-1)$ and the current signs.
The following proposition bounds the bit cost of these
computations and the resulting sign choices and comparisons.

\begin{proposition}[Exact local computation]\label{prop:arithmetic}
Let $(K,\sigma)$ be an admissible state of the bipartite graph $G$,
let $v\in V\setminus K$, and let $1\le \jmath_0\le n$.
Write $N_G(v)=\{u\in V:\{u,v\}\in E\}$ for the neighbors of $v$
in $G$. Recall that $U_v=N_G(v)\cap K$.
The entries $(\mathbf{Q}_K^{-1})_{uw}$ with $u,w\in U_v$
are rational.
One can compute these entries, test
$\lambda_u\ge\gamma_0C_{\jmath_0+1}$ for every $u\in U_v$
when $\jmath_0<n$, and, when needed, choose the signs by
\eqref{eq:greedy}, using exact rational arithmetic in
\[
 O\!\left(n^7
   \bigl(\log(n+1)+\log(\Delta+1)\bigr)^2\right)
\]
bit operations.
This bound includes forming the required matrix data from
the current signing $\sigma$ and is independent of the
preceding execution history.
\end{proposition}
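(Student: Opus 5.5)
Since $G$ is bipartite and $K$ is active, $K=L\cup R$ with $L$ the side of $K$ opposite to $v$, so $U_v\subseteq L$. Order the vertices so that
\[
\mathbf{A}_K=\begin{pmatrix}\mathbf 0&\mathbf{B}\\ \mathbf{B}^{\mathsf T}&\mathbf 0\end{pmatrix},
\]
where $\mathbf{B}$ is the signed $L\times R$ biadjacency matrix with entries in $\{-1,0,1\}$. The plan is to prove rationality through a Schur complement. Writing $\mathbf{Q}_K=\mathbf{I}-\mathbf{A}_K/\rho_0$, the block-inverse formula gives
\[
(\mathbf{Q}_K^{-1})[L,L]=\Bigl(\mathbf{I}_L-\tfrac{1}{\rho_0^2}\mathbf{B}\mathbf{B}^{\mathsf T}\Bigr)^{-1}
=\rho_0^2\bigl(\rho_0^2\mathbf{I}_L-\mathbf{B}\mathbf{B}^{\mathsf T}\bigr)^{-1}.
\]
Admissibility implies $\norm{\mathbf{B}}<\rho_0$, so $\mathbf{M}:=\rho_0^2\mathbf{I}_L-\mathbf{B}\mathbf{B}^{\mathsf T}$ is positive definite. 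Since $\rho_0^2=4(\Delta-1)$ is an integer, $\mathbf{M}$ is an integer matrix. Hence all entries of this block, and in particular those indexed by $U_v$, are rational.

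To compute them exactly, form $\mathbf{B}$ from $\sigma$ and then $\mathbf{M}$, using $O(n^3)$ operations on small integers. Entries of $\mathbf{M}$ have absolute value at most $\rho_0^2+n$, so they have $O(\log(n+1)+\log(\Delta+1))$ bits. Then solve $\mathbf{M}\mathbf{X}=\rho_0^2\mathbf{E}_{U_v}$, where $\mathbf{E}_{U_v}$ is the set of columns of the identity indexed by $U_v$, for at most $\Delta\le n$ right-hand sides. For this I will use fraction-free Gaussian elimination (Bareiss). By Hadamard's inequality, every intermediate minor and $\det\mathbf{M}$ has $O(n(\log(n+1)+\log(\Delta+1)))$ bits. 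Each of the $O(n^3)$ elimination steps multiplies and divides numbers of this size, which costs $O(n^2\beta^2)$ bit operations with $\beta=\log(n+1)+\log(\Delta+1)$. The total is $O(n^5\beta^2)$. Back-substitution for up to $n$ right-hand sides, or equally Cramer's rule via adjugate entries, fits within $O(n^7\beta^2)$ even done naively, for example by computing each needed cofactor by a separate Bareiss run: $n^2$ cofactors at $O(n^5\beta^2)$ each. That naive count is presumably where the stated $n^7$ comes from, so I would simply use it. The result is each entry $(\mathbf{Q}_K^{-1})_{uw}=\rho_0^2\,\mathrm{adj}(\mathbf{M})_{wu}/\det\mathbf{M}$ as a ratio of integers with $O(n\beta)$ bits.

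The remaining tasks are comparisons. The test $\lambda_u\ge\gamma_0C_{\jmath_0+1}$ compares two rationals. By Lemma~\ref{lem:weights}, $C_{\jmath_0+1}$ is a ratio of $O(n\beta)$-bit integers computable in $O(n^3\beta^2)$, and $\gamma_0=(64n^2+1)/(64n^2)$. Cross-multiplying costs $O(n^2\beta^2)$ per vertex, so $O(n^3\beta^2)$ over $U_v$. For the greedy rule~\eqref{eq:greedy}, put all required entries over the common denominator $\det\mathbf{M}$. The sign of $\sum_{k<i}(\mathbf{Q}_K^{-1})_{u_iu_k}s_{u_k}$ is then the sign of an integer sum of at most $\ell\le n$ adjugate entries, each with $O(n\beta)$ bits. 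This costs $O(\ell^2\cdot n\beta)$ additions, which is negligible.

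I expect the main obstacle to be controlling the bit lengths rigorously. This has two parts. First, the entry bound on $\mathbf{M}$ must be fed into Hadamard's bound so that every Bareiss intermediate, being a minor of $\mathbf{M}$, has $O(n\beta)$ bits. Second, the whole computation must start from $\sigma$ alone, so that the cost is independent of history. Both follow from the explicit integer structure of $\mathbf{M}$, and the remaining steps are routine accounting within the stated $O(n^7\beta^2)$ bound.
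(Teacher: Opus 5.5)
Your proposal is correct and follows the paper's proof: the same Schur-complement reduction to the positive definite integer matrix $\mathbf{M}=\rho_0^2\mathbf{I}_{L\cap K}-\mathbf{B}\mathbf{B}^{\mathsf T}$, Hadamard bounds giving intermediates of $O(n\beta)$ bits (with your $\beta$), $O(n^2)$ determinant evaluations at $O(n^5\beta^2)$ each, cross-multiplication for the deletion test, and a common denominator for the greedy sums. The only real difference is the entry extraction. You use cofactors computed by Bareiss; the off-diagonal cofactor minors are not positive definite, so they may need row pivoting, which is harmless. The paper instead uses the matrix determinant lemma with the updates $\mathbf{e}_a\mathbf{e}_a^{\mathsf T}$ and $(\mathbf{e}_a+\mathbf{e}_b)(\mathbf{e}_a+\mathbf{e}_b)^{\mathsf T}$, so every determinant is of a positive definite matrix and elimination needs no row exchanges. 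One small slip: $\Delta\le n$ need not hold, but $|U_v|\le n$ still bounds the number of right-hand sides.
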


Thus the local decisions can be implemented exactly without
forming the possibly irrational entries of $\mathbf{Q}_K$
or the entire inverse $\mathbf{Q}_K^{-1}$.
The implementation uses only the integer $\rho_0^2$,
the current signs, and the rational weights
$C_1,\ldots,C_n$; no approximation of $\rho_0$ or numerical
eigenvalue test is needed.
Combining this local bit bound with the bound on the number
of updates gives the total bit-complexity bound in
Theorem~\ref{thm:main}.

\section{The algorithm and its analysis}\label{sec:algorithm}

In this section, we present the deterministic signing algorithm 
and prove Theorem~\ref{thm:main} using the results stated in
Section~\ref{sec:ingredients}.

\subsection{The signing algorithm}
\subsubsection{The main routine and its interface}

Algorithm~\ref{alg:signing} starts from the empty active state
$(K,\sigma)=(\varnothing,\varnothing)$ (Lines~12--13, Algorithm~\ref{alg:signing}) and processes vertices
in label order. It uses the subroutine $\Fill$, specified in
Algorithm~\ref{alg:fill}, to add an inactive vertex
$v\in V\setminus K$ to $K$ while maintaining the admissibility
of the state $(K,\sigma)$, allowing temporary deletions and
changes of signs along the way.
Algorithm~\ref{alg:fill} is used only for calls generated
during the execution of Algorithm~\ref{alg:signing},
including recursive calls. Its output guarantee is
asserted for these calls.
The subroutine receives the current pair $(K,\sigma)$ and
returns an updated pair.

An \emph{outer call} (Line~15, Algorithm~\ref{alg:signing}) is a call to $\Fill$ made directly by
Algorithm~\ref{alg:signing}. A \emph{child call} (Line~15, Algorithm~\ref{alg:fill}) is a recursive
call made by $\Fill$ to restore a deleted neighbor $u_*$;
the calling procedure waits until the child call returns.

In each call to $\Fill$, $v$ denotes its fourth input argument,
the vertex to be inserted. The input $P$ records the vertices
of all unfinished ancestor calls. These vertices are inactive
and distinct from $v$, so $P\subseteq V\setminus(K\cup\{v\})$.
The recursion depth is $\jmath_0=|P|+1$
(Line~2, Algorithm~\ref{alg:fill}), which determines
the deletion threshold.

An outer call receives $P=\varnothing$ and the current state
$(K,\sigma)$. If a call for $v$ deletes a neighbor $u_*$,
it invokes a child call with $u_*$ as its fourth argument
and $P\cup\{v\}$ as its ancestor set.
Upon returning, the child call has reinserted $u_*$
and restored the active vertex set to its value before
the deletion. The signing $\sigma$, however, may have changed,
since reinsertion assigns new signs to the incident edges.
The parent's vertex $v$ remains inactive throughout
the child call.

\begin{algorithm}[H]
\caption{Deterministic signing of a bipartite graph}
\label{alg:signing}
\begin{algorithmic}[1]
\Require A finite simple bipartite graph $G=(V,E)$ with
vertex set $V=\{1,\ldots,n\}$ and edge set $E$;
an integer $\Delta\ge3$ bounding the maximum degree.
\Ensure A signing $\sigma:E\to\{-1,1\}$ whose signed adjacency
matrix has operator norm less than $2\sqrt{\Delta-1}$.

\If{$n\le1$}
    \State \Return the empty signing.
\EndIf
\State $\rho_0\gets2\sqrt{\Delta-1}$.
    \Comment{Spectral threshold}
\State $\gamma_0\gets1+1/(64n^2)$.
    \Comment{Progress factor}
\State $C_n\gets \gamma_0$.
\For{$\jmath_0=n-1$ \textbf{downto} $2$}
    \State $C_{\jmath_0}\gets \gamma_0/(1-\gamma_0C_{\jmath_0+1}/4)$.
\EndFor
\State $C_1\gets \gamma_0/(1-\Delta \gamma_0C_2/\rho_0^2)$.
\State $\mathbf{C}\gets(C_1,\ldots,C_n)$.
    \Comment{Weights indexed by recursion depth}
\State $K\gets\varnothing$.
    \Comment{Initialize the active vertex set}
\State $\sigma\gets\varnothing$.
    \Comment{Initialize the empty signing}
\For{$v=1,\ldots,n$}
    \State $(K,\sigma)\gets
    \Call{Fill}{G,K,\sigma,v,\varnothing,\rho_0,\gamma_0,\mathbf{C}}$.
    \Statex \hspace{\algorithmicindent}
    \Comment{Replace the current state by the returned pair}
\EndFor
\State \Return $\sigma$.
    \Comment{All vertices are now active}
\end{algorithmic}
\end{algorithm}

For $n\ge2$, Algorithm~\ref{alg:signing} first computes the
weights from Lemma~\ref{lem:weights} (Lines~6--11, Algorithm~\ref{alg:signing}), initializes
$(K,\sigma)=(\varnothing,\varnothing)$, and then makes the outer
calls in vertex order. Each call to $\Fill$ is required to
return an admissible state whose active set consists precisely
of its input active set together with the vertex it was asked
to insert. The returned state becomes the input state for the
next outer call. Thus, after the outer call that inserts
vertex $v$ is completed, the active set will be
$K=\{1,\ldots,v\}$. 
The proof of Theorem~\ref{thm:main} establishes that every call
terminates and returns an admissible active state $(K,\sigma)$,
where the returned set $K$ is exactly the input set $K$
together with the vertex $v$ to be inserted.

\subsubsection{The recursive subroutine}

A call to $\Fill$ for vertex $v$ has depth $\jmath_0=|P|+1$.
At each iteration, it either selects a neighbor $u_*\in N_G(v)\cap K$
for deletion and recursive reinsertion, or assigns signs to all
edges joining $v$ to vertices in $K$ and then inserts $v$ into $K$.
Algorithm~\ref{alg:fill} specifies the selection rule for $u_*$
and the steps in these two alternatives.

The procedure first computes $U_v=N_G(v)\cap K$ and the values
$\lambda_u$ for $u\in U_v$ from the current state $(K,\sigma)$ (Lines~4--7, Algorithm~\ref{alg:fill}).
If $\jmath_0<n$ and some $u\in U_v$ satisfies
\begin{equation}\label{eq:threshold}
 \lambda_u\ge\gamma_0C_{\jmath_0+1},
\end{equation}
it selects the smallest such vertex $u_*$ (Lines~9--10, Algorithm~\ref{alg:fill}), removes it from $K$,
and erases the signs on its edges to vertices remaining in $K$ (Lines~13--14, Algorithm~\ref{alg:fill}).
It then calls $\Fill$ to restore $u_*$, passing $P\cup\{v\}$
as the set of ancestor vertices.

\begin{algorithm}[H]
\caption{The recursive subroutine $\Fill$}
\label{alg:fill}
\small
\begin{algorithmic}[1]
\Require The graph $G=(V,E)$, with $n=|V|\ge2$, and the fixed
parameters $\rho_0,\gamma_0,\mathbf{C}=(C_1,\ldots,C_n)$
from Algorithm~\ref{alg:signing}.\newline
An admissible state $(K,\sigma)$, a vertex $v\in V\setminus K$,
and an ancestor set $P\subseteq V\setminus(K\cup\{v\})$.
If $P\ne\varnothing$, then $v$ has a neighbor in $P$.
\Ensure An admissible state $(K_{\mathrm{out}},\sigma_{\mathrm{out}})$
with $K_{\mathrm{out}}=K\cup\{v\}$, where $K$ is the input active set;
$\sigma_{\mathrm{out}}$ signs every edge of $G[K_{\mathrm{out}}]$.
\Procedure{Fill}{$G,K,\sigma,v,P,\rho_0,\gamma_0,\mathbf{C}$}
    \State $\jmath_0\gets|P|+1$.
        \Comment{Current recursion depth}
    \Loop
        \State $U_v\gets\{u\in K:\{u,v\}\in E\}$.
            \Comment{Active neighbors of $v$}
        \State Form $\mathbf{A}_K$, the signed adjacency matrix
        of the current state $(K,\sigma)$.
        \State $\mathbf{Q}_K\gets \mathbf{I}_K-\mathbf{A}_K/\rho_0$.
        \State $\lambda_u\gets(\mathbf{Q}_K^{-1})_{uu}$ for each $u\in U_v$.
        \State $u_*\gets\texttt{none}$.
            \Comment{No deletion candidate selected}
        \If{$\jmath_0<n$}
            \State If some $u\in U_v$ satisfies
            $\lambda_u\ge \gamma_0C_{\jmath_0+1}$, set $u_*$ to the
            smallest such vertex.
        \EndIf
        \If{$u_*\ne\texttt{none}$}
            \State Erase from $\sigma$ all signs on
            edges incident to $u_*$.
            \State $K\gets K\setminus\{u_*\}$.
            \State $(K,\sigma)\gets
            \Call{Fill}{G,K,\sigma,u_*,P\cup\{v\},
            \rho_0,\gamma_0,\mathbf{C}}$.
            \State \textbf{continue}.
                \Comment{Recompute from the returned state}
        \Else
            \State Write $U_v=\{u_1<\cdots<u_\ell\}$,
            where $\ell=|U_v|$.
            \State Choose temporary signs
            $\mathbf{s}=(s_{u_i})_{i=1}^{\ell}$ as follows.
            \For{$i=1,\ldots,\ell$}
                \If{$\displaystyle
                    \sum_{k=1}^{i-1}
                    (\mathbf{Q}_K^{-1})_{u_i u_k}s_{u_k}>0$}
                    \State $s_{u_i}\gets-1$.
                \Else
                    \State $s_{u_i}\gets+1$.
                \EndIf
            \EndFor
            \State $K\gets K\cup\{v\}$.
            \State Extend $\sigma$ by setting
            $\sigma(\{v,u_i\})\gets s_{u_i}$
            for $i=1,\ldots,\ell$.
            \State \Return $(K,\sigma)$.
                \Comment{Return the updated active set and signing}
        \EndIf
    \EndLoop
\EndProcedure
\end{algorithmic}
\end{algorithm}

The child call reinserts the deleted vertex $u_*$.
Since $u_*$ is adjacent to the parent's vertex $v$, and
$v$ remains inactive, $u_*$ has an inactive neighbor in
the child's ancestor set $P\cup\{v\}$.
When the child call returns, the parent call for $v$
updates $(K,\sigma)$ to the returned state and starts
the next loop iteration, recomputing the matrix data
and repeating the deletion test
(Line~16, Algorithm~\ref{alg:fill}).

The state $(K,\sigma)$ returned by the child call has the same
vertex set $K$ as before the deletion, but the signing $\sigma$
may have changed. At the start of every iteration, the procedure
therefore recomputes $U_v=N_G(v)\cap K$.
The entries $(\mathbf{Q}_K^{-1})_{uw}$ needed in that iteration
are computed from the current signing $\sigma$ as they are used.
These computations and the subsequent sign choices use the
exact rational arithmetic described in
Proposition~\ref{prop:arithmetic}.

If no vertex is selected for deletion, the procedure uses
Lemma~\ref{lem:signs} to assign a sign in $\{-1,1\}$ to each
edge $\{v,u\}$ with $u\in U_v$ (Lines~20--26, Algorithm~\ref{alg:fill}). It then extends $\sigma$ by
these signs, replaces $K$ with $K\cup\{v\}$, and returns
the updated state $(K,\sigma)$ (Lines~27--29, Algorithm~\ref{alg:fill}). If $U_v=\varnothing$, no signs
need to be assigned.

All deletions and insertions modify only the temporary state
$(K,\sigma)$; the input graph $G=(V,E)$ remains fixed.
Deleting and recursively reinserting $u_*$ may change signs
on edges incident to $u_*$ or to other vertices deleted
during the recursive call. For any edge $\{x,y\}\in E$,
the sign $\sigma(\{x,y\})$ remains unchanged if both $x$
and $y$ remain in $K$ throughout this process.

\subsubsection{Why the deletion is proactive}

The procedure tests~\eqref{eq:threshold} before choosing signs
on the edges from $v$ to $U_v=N_G(v)\cap K$.
It may therefore delete a neighbor $u_*$ even if some choice
of these signs would already allow $v$ to be inserted while
preserving admissibility. This is why we call the deletion
\emph{proactive}. The threshold is chosen to control the
determinant change in both the deletion and insertion cases,
as needed for the termination proof.

By Lemma~\ref{lem:determinants}, deleting a vertex $u_*$
that satisfies~\eqref{eq:threshold} gives
\[
 \frac{D_{K\setminus\{u_*\}}}{D_K}
 =\lambda_{u_*}\ge\gamma_0 C_{\jmath_0+1}.
\]
If $\jmath_0<n$ and no $u\in U_v$ satisfies~\eqref{eq:threshold},
the greedy estimate in Lemma~\ref{lem:signs} and the weight
recurrence instead give a vector $\mathbf{s}$ of signs on
the edges from $v$ to $U_v$ such that
\[
 \frac{D_{K\cup\{v\}}}{D_K}
 =1-q_v(\mathbf{s})\ge\frac{\gamma_0}{C_{\jmath_0}}>0.
\]
Thus the insertion preserves admissibility, with an explicit
lower bound on the determinant ratio.
At depth $\jmath_0=n$, the set $K$ is empty, so no signs are needed
and the same inequality holds with $q_v(\mathbf{s})=0$ and
$C_n=\gamma_0$.
The proof of Theorem~\ref{thm:main} verifies these claims
and uses the two determinant estimates to bound the total
number of vertex insertions and deletions.

To count the vertex operations and specify the states used
in the termination proof, we call each deletion from $K$
or insertion into $K$ an \emph{update}.
A deletion update removes $u_*$ and its incident signs,
and ends when the child call for $u_*$ begins with the
modified state $(K,\sigma)$.
Operations within that child call are counted separately.
An insertion update adds $v$ and its incident signs,
and ends when the call returns and the caller adopts
the returned state.
Starting an outer call or receiving a returned state
does not count as an additional update.

The analysis considers the state $(K,\sigma)$ immediately
after each update and at the start of each outer call.
After a deletion, this is the state passed to the child;
after an insertion, it is the state returned to the caller.
For a completed outer call, the caller is
Algorithm~\ref{alg:signing}.

\subsection{Proof of Theorem~\ref{thm:main}}
\begin{proof}
The proof has three main tasks, organized into five steps.

First, in Steps~1 and~2, we establish the invariants needed
throughout the execution.
Step~1 shows that the vertices associated with unfinished
calls to $\Fill$ are distinct and lie outside the current
vertex set $K$. It also shows that each call for a vertex
$v$, if it returns, returns a state whose vertex set is
its input set together with $\{v\}$.
Step~2 shows that the current state $(K,\sigma)$ is admissible
whenever an outer call begins and immediately after every update.
In particular, $\mathbf{Q}_K$ is positive definite at each of
these points.
This guarantees admissibility on the current set $K$, which
may still be a proper subset of $V$. To obtain a signing of
the whole graph, we must also prove that all calls terminate
and that the final active set is $V$; these conclusions
require Steps~3 and~4.

Second, in Steps~3 and~4, we bound the number of updates
and prove termination and the spectral bound.
Step~3 defines a potential function that decreases at each
update and uses it to show that the total number of vertex
insertions and deletions is less than $130n^3$.
Step~4 uses this update bound to prove that all calls
terminate. It then combines termination with the invariants
from Steps~1 and~2 to obtain the required spectral bound.

Finally, Step~5 combines the update bound with the local
computation cost in Proposition~\ref{prop:arithmetic}
to obtain the total bit-complexity bound.

\smallskip

\noindent\emph{Step 1: the recursion stack and the returned vertex set.}
We show that every call to $\Fill$ for a vertex $v$,
if it returns, returns a state whose vertex set is its input set
$K$ together with $\{v\}$. We also show that the recursion
depth is at most $n$.

The \emph{stack} is the ordered list of unfinished calls to
$\Fill$, from the outer call to the call currently executing.
Let $v_1,\ldots,v_{\jmath_0}$ be the vertices that the unfinished calls
are to insert.
Each $v_i$ is the fourth input argument of the corresponding
call to $\Fill$, as specified in Line~15 of
Algorithm~\ref{alg:signing} for the outer call and Line~15 of
Algorithm~\ref{alg:fill} for each recursive call.
 Let $S$ be the set of vertices for which
Algorithm~\ref{alg:signing} has already begun an outer call
to $\Fill$, including those whose outer calls have finished.

We claim that, for every outer or child call, the following
relations hold at two points: immediately after the call begins
and, if it returns, immediately after it returns.
At each point, $K$ and the stack refer to their states after
the corresponding event:
\begin{equation}\label{eq:SK}
 v_i\ne v_k \quad (1\le i<k\le \jmath_0),
 \qquad
 S=K\mathbin{\dot\cup}\{v_1,\ldots,v_{\jmath_0}\},
\end{equation}
where $\dot\cup$ denotes disjoint union.
At entry, the call has been added to the stack.
Upon return, the call has been removed from the stack,
and the caller has adopted the returned active set $K$.
Thus each vertex in $S$ either belongs to $K$ or is awaiting
insertion by exactly one unfinished call.

We prove these relations (\ref{eq:SK}) by induction over outer-call starts
and updates. Initially, $S=K=\varnothing$ and the stack is empty,
so the relations  (\ref{eq:SK}) hold.
Starting an outer call for $v\notin S$ adds $v$ to $S$
and places the call for $v$ on the stack. Since $v\notin S$,
it belongs neither to $K$ nor to any existing call.
Deleting $u_*\in K$ removes it from $K$ and places the child
call for $u_*$ on the stack. Since the stack vertices lie
outside $K$, this introduces no repeated vertex.
Inserting the current vertex $v_{\jmath_0}$ adds it to $K$
and removes its completed call from the stack.
Neither deletion nor insertion changes $S$.

Each event therefore preserves the distinctness of the stack
vertices and the disjoint union
$S=K\mathbin{\dot\cup}\{v_1,\ldots,v_{\jmath_0}\}$.
This completes the induction and proves~\eqref{eq:SK}.

In particular, every outer call for $v$ begins with $v\notin K$.
During any call to $\Fill$, the vertices in its ancestor set
$P$ remain outside the current set $K$.
Now consider any call to $\Fill$ that returns, whether an
outer call or a child call. Let $v$ denote the vertex that
this call is to insert, and let $K_{\mathrm{in}}$ and
$K_{\mathrm{out}}$ denote the vertex sets of its input
and returned states, respectively.
Let $P$ be the ancestor set supplied to this call.
If it is an outer call, then $P=\varnothing$.
If it is a child call, its ancestor calls remain unfinished
throughout its execution, so their vertex set remains $P$.

No new outer call starts between the entry and return of
the call under consideration, so $S$ is unchanged.
At entry, the unfinished calls correspond to the vertices
in $P\cup\{v\}$. Immediately after this call returns, all
calls it initiated have also returned, leaving precisely
the ancestor calls corresponding to $P$.
The disjoint-union relation therefore gives
\[
 S
 =K_{\mathrm{in}}\mathbin{\dot\cup}P
    \mathbin{\dot\cup}\{v\}
 =K_{\mathrm{out}}\mathbin{\dot\cup}P.
\]
Removing $P$ from both expressions yields
\[
 K_{\mathrm{out}}=K_{\mathrm{in}}\cup\{v\}.
\]
For an outer call, the same argument applies with
$P=\varnothing$.
This proves the required set equality for every call that
returns, without assuming that all calls terminate.

Finally, we use~\eqref{eq:SK} to bound the recursion depth
and the number of active neighbors of the current vertex.
Since $S\subseteq V$, we have
\(
 |K|+\jmath_0=|S|\le n.
\)
Thus the recursion depth is at most $n$, and $K=\varnothing$
whenever $\jmath_0=n$.

Now consider the current vertex $v=v_{\jmath_0}$.
If $\jmath_0=1$, then $v$ has at most $\Delta$ neighbors in $K$.
If $\jmath_0\ge2$, the current call was initiated by the
parent call after deleting its neighbor $v$.
The parent vertex $v_{\jmath_0-1}$ is therefore adjacent to $v$
and remains outside $K$ by~\eqref{eq:SK}.
Consequently, $v$ has at most $\Delta-1$ neighbors in $K$.
Hence
\begin{equation}\label{eq:shang}
 |N_G(v)\cap K|
 \le
 \begin{cases}
 \Delta, & \jmath_0=1,\\
 \Delta-1, & \jmath_0\ge2.
 \end{cases}
\end{equation}
These bounds will be used in Step~2 to show that inserting
the current vertex preserves admissibility.

\smallskip
\noindent\emph{Step 2: every update preserves admissibility.}
We prove that each update preserves positive definiteness of
$\mathbf{Q}_K$ and leaves $\sigma$ defined on exactly the edges
of $G[K]$.

The empty initial state is admissible.
Suppose that the current state $(K,\sigma)$ is admissible.
If $u_*\in K$ is deleted, Lemma~\ref{lem:determinants} gives
\(
 \mathbf{Q}_{K\setminus\{u_*\}}\succ0.
\)

Now suppose that no neighbor of $v$ is selected for deletion.
The procedure then chooses signs on the edges joining $v$
to its neighbors in $K$ and inserts $v$ into $K$.
Recall that $U_v=N_G(v)\cap K$.
Let $\mathbf{s}=(s_u)_{u\in U_v}$ be the sign vector chosen
by the greedy rule in Lemma~\ref{lem:signs}
(Lines~20--26, Algorithm~\ref{alg:fill}), where $s_u$
is the sign assigned to the edge $\{v,u\}$.
If $\jmath_0<n$, the deletion test has failed for every
$u\in U_v$, so
\[
 \lambda_u<\gamma_0C_{\jmath_0+1}
 \qquad(u\in U_v).
\]
Combining Lemma~\ref{lem:signs}, the neighbor bounds
in~\eqref{eq:shang}, and the weight recurrence in Lemma \ref{lem:weights} gives
\begin{equation}\label{eq:safeinsertion0}
 q_v(\mathbf{s})
 \le \frac{1}{\rho_0^2}\sum_{u\in U_v}\lambda_u
 \le 1-\frac{\gamma_0}{C_{\jmath_0}}
 \qquad(1\le\jmath_0<n).
\end{equation}
For $\jmath_0=1$, this uses $|U_v|\le\Delta$ and the
definition of $C_1$. For $\jmath_0\ge2$, it uses
$|U_v|\le\Delta-1$, $\rho_0^2=4(\Delta-1)$, and the
recurrence for $C_{\jmath_0}$.
The estimate also holds for $U_v=\varnothing$, with
$q_v(\mathbf{s})=0$ and an empty sum.

If $\jmath_0=n$, Step~1 gives $K=\varnothing$, so
$q_v(\mathbf{s})=0$ and $C_n=\gamma_0$.
Together with~\eqref{eq:safeinsertion0} and the positivity
of the weights, this proves that every selected insertion
satisfies
\begin{equation}\label{eq:residual}
 1-q_v(\mathbf{s})\ge\frac{\gamma_0}{C_{\jmath_0}}>0.
\end{equation}
Lemma~\ref{lem:determinants} therefore gives
$\mathbf{Q}_{K\cup\{v\}}\succ0$ for the state obtained
with the chosen signs.

It remains to verify that each update leaves $\sigma$ defined
on exactly the edges of the current induced graph $G[K]$.
Deleting $u_*$ erases the signs on all edges incident to $u_*$
and retains every other current sign.
Inserting $v$ assigns signs to all edges joining $v$ to
$U_v=N_G(v)\cap K$, while retaining the signs on the
previously active induced graph.

Together with the positive definiteness proved above,
these observations complete the induction:
after every finite sequence of updates,
$\mathbf{Q}_K\succ0$ and $\sigma$ is defined on exactly
the edges of $G[K]$. No assumption of termination is used.

\smallskip

\smallskip

\noindent\emph{Step 3: a potential for bounding the number of updates.}

In this step, we prove that fewer than $130n^3$ updates can occur by defining
a potential function $\Phi$ that satisfies $\Phi\ge1$
throughout the execution and decreases by a fixed factor
at every update.

We consider the initial state, the state immediately after each
update, and the state at the start of each outer call.
At any such time, let $\kappa$ be the number of outer calls that
have not yet started, and let $\jmath_0$ be the number of unfinished
calls to $\Fill$. Thus $\jmath_0=0$ when no call is unfinished.
Recall that $C_1,\ldots,C_n$ are the positive depth weights
defined in Lemma~\ref{lem:weights}.
For the current state $(K,\sigma)$, define
\begin{equation}\label{eq:potential}
 \Phi:=\frac{C_1^\kappa\prod_{i=1}^{\jmath_0}C_i}{D_K},
\end{equation}
where the empty product is one.
The factor $C_i$ accounts for the unfinished call at depth $i$,
and each outer call not yet started contributes a factor $C_1$.
We call $\Phi$ a \emph{potential function}.
It is used only in the proof and need not be computed by
the algorithm.

Initially, $\kappa=n$, $\jmath_0=0$, and $D_K=1$, so $\Phi=C_1^n$.
When an outer call begins, $\kappa$ decreases by one and $\jmath_0$
changes from $0$ to $1$, while $(K,\sigma)$ remains unchanged.
The factor $C_1$ removed from $C_1^\kappa$ is therefore replaced
by the factor $C_1$ in the product over unfinished calls.
Thus starting an outer call leaves $\Phi$ unchanged.

Consider a deletion update in a call at depth $\jmath_0$,
and let $(K,\sigma)$ be the state immediately before deletion.
By~\eqref{eq:delete} in Lemma~\ref{lem:determinants},
removing $u_*$ from $K$ and retaining the signs on the
remaining edges gives
\[
 D_{K\setminus\{u_*\}}=\lambda_{u_*}D_K,
 \qquad
 \lambda_{u_*}=(\mathbf{Q}_K^{-1})_{u_*u_*},
\]
where $\mathbf{Q}_K$ is evaluated before deletion.
The child call for $u_*$ then begins at depth $\jmath_0+1$,
adding the factor $C_{\jmath_0+1}$ to the numerator
of~\eqref{eq:potential}. The number $\kappa$ and the factors
for all previously unfinished calls remain unchanged.

For each deletion or insertion update, let
$\Phi_{\mathrm{before}}$ and $\Phi_{\mathrm{after}}$ denote
the potential immediately before and after that update,
respectively.
For a deletion, the latter value is taken when the child call
begins, before it performs any updates.
For an insertion, it is taken after the current call returns
and the caller adopts the returned state.

For the deletion update considered above, the determinant
identity and the deletion threshold~\eqref{eq:threshold} give
\begin{equation}\label{eq:deletionprogress}
 \frac{\Phi_{\mathrm{after}}}{\Phi_{\mathrm{before}}}
 =\frac{C_{\jmath_0+1}D_K}{D_{K\setminus\{u_*\}}}
 =\frac{C_{\jmath_0+1}}{\lambda_{u_*}}
 \le\frac{1}{\gamma_0}.
\end{equation}

Now consider an insertion update in a call at depth $\jmath_0$.
According to Lemma \ref{lem:determinants}, inserting $v$ with the chosen signs $\mathbf{s}$ multiplies
the determinant by $1-q_v(\mathbf{s})$.
The call then returns and is removed from the stack,
removing the factor $C_{\jmath_0}$ from the numerator
of the potential, while $\kappa$ and all ancestor factors
remain unchanged.
Thus, by~\eqref{eq:residual},
\begin{equation}\label{eq:insertionprogress}
 \frac{\Phi_{\mathrm{after}}}{\Phi_{\mathrm{before}}}
 =\frac{1}{C_{\jmath_0}(1-q_v(\mathbf{s}))}
 \le\frac{1}{\gamma_0}.
\end{equation}

In~\eqref{eq:deletionprogress}, the child call has just begun;
any updates performed within it are counted separately.
Each ancestor call keeps its factor in~\eqref{eq:potential}
until that call itself finishes.

By Step~2, every state under consideration is admissible.
Lemma~\ref{lem:determinants} therefore gives $0<D_K\le1$,
while Lemma~\ref{lem:weights} gives $C_i\ge1$.
Consequently, $\Phi\ge1$.
Since each update reduces $\Phi$ by at least a factor
$\gamma_0>1$, after any finite number $T$ of updates
performed by the algorithm we have
\begin{equation}\label{eq:Phig0}
 1\le\Phi\le C_1^n\gamma_0^{-T}
       <(5\gamma_0)^n\gamma_0^{-T},
\end{equation}
where the last inequality uses $C_1<5\gamma_0$ from
Lemma~\ref{lem:weights}.

Writing $\log$ for the natural logarithm, we obtain
\[
 \log\gamma_0
 =\int_1^{\gamma_0}\frac{dx}{x}
 \ge\frac{\gamma_0-1}{\gamma_0}
 =\frac{1}{64n^2+1}
 \ge\frac{1}{65n^2}.
\]
Also, $\gamma_0\le65/64$ and $5(65/64)<e^2$, so
$\log(5\gamma_0)<2$.
Taking logarithms in~\eqref{eq:Phig0} and using the bounds
on $\log\gamma_0$ and $\log(5\gamma_0)$ yields
\begin{equation}\label{eq:count}
 T<\frac{n\log(5\gamma_0)}{\log\gamma_0}<130n^3.
\end{equation}
This bound holds for the number of updates performed so far
at any finite stage, without assuming that all calls terminate.

\smallskip

\noindent\emph{Step 4: termination and the spectral bound.}
In this step,
we use the update bound to prove that Algorithm~\ref{alg:signing}
terminates and returns a signing satisfying~\eqref{eq:main}.

Each iteration of the loop in Algorithm~\ref{alg:fill}
performs one update: either it deletes a vertex and invokes
a child call, or it inserts the current vertex and returns.
By Proposition~\ref{prop:arithmetic}, the local computation
preceding each update takes finite time.
Algorithm~\ref{alg:signing} starts at most $n$ outer calls.
An infinite execution would therefore require infinitely
many updates, contradicting~\eqref{eq:count}.
Thus Algorithm~\ref{alg:signing} and every call to $\Fill$
made during its execution terminate.

We may therefore apply the conclusion of Step~1 to all completed
calls and obtain $K=V$ when Algorithm~\ref{alg:signing} finishes.
Step~2 ensures that the final state is admissible.
The spectral constraint~\eqref{eq:spectralconstraint} then yields
the strict bound~\eqref{eq:main}.

\smallskip
\noindent\emph{Step 5: the stated bit complexity.}
By Lemma~\ref{lem:weights}, initialization requires
\(
 O\!(n^3
   \bigl(\log(n+1)+\log(\Delta+1)\bigr)^2)
\)
bit operations. By Proposition~\ref{prop:arithmetic}, the local
computation in each loop iteration requires at most
\(
 O\!(n^7
   \bigl(\log(n+1)+\log(\Delta+1)\bigr)^2)
\)
bit operations. Each iteration performs exactly one update,
and~\eqref{eq:count} bounds the total number of updates by
$130n^3$. Multiplying these bounds, and including the
initialization cost, gives
\[
 O\!\left(n^{10}
   \bigl(\log(n+1)+\log(\Delta+1)\bigr)^2\right)
\]
bit operations.

The remaining operations also fit within this bound.
These include reading the input, finding a bipartition by
a graph traversal when needed, and maintaining vertex labels,
edge signs, and the recursion stack. Even an implementation
that copies the entire current state and ancestor set at
each call needs to copy only $O(n^2\log(n+1))$ bits per call.
This additional cost is absorbed by the bound for the local
computation. The proof is complete.
\end{proof}

\section{Proofs of the auxiliary results}\label{sec:details}

This section proves the auxiliary results stated in
Section~\ref{sec:ingredients}.

\subsection{Determinant identities}

\begin{proof}[Proof of Lemma~\ref{lem:determinants}]
Deleting a vertex gives a principal submatrix of
$\mathbf{Q}_K$, whereas inserting a vertex adds one row and
one column. We use these observations to prove the deletion
and insertion formulas, and then establish the bound
$0<D_K\le1$.

\smallskip
\noindent\emph{Deletion.}
Let $u\in K$. Deleting $u$ leaves the signs on all edges
of $G[K\setminus\{u\}]$ unchanged. Hence the constraint matrix
after deletion is
\[
 \mathbf{Q}_{K\setminus\{u\}}
 =
 \mathbf{Q}_K[K\setminus\{u\},K\setminus\{u\}].
\]
This principal submatrix is positive definite because
$\mathbf{Q}_K\succ0$, so the state after deletion is admissible.
The cofactor formula for the diagonal entries of
$\mathbf{Q}_K^{-1}$ gives
\[
 \frac{D_{K\setminus\{u\}}}{D_K}
 =
 \frac{\det\mathbf{Q}_K[K\setminus\{u\},K\setminus\{u\}]}
      {\det\mathbf{Q}_K}
 =
 (\mathbf{Q}_K^{-1})_{uu}
 =
 \lambda_u.
\]
This proves~\eqref{eq:delete}. When $|K|=1$, the numerator
is the empty determinant, which equals 1 by convention.

\smallskip
\noindent\emph{Insertion.}
Let $v\in V\setminus K$ and set $U_v=N_G(v)\cap K$.
We retain the current signs on $G[K]$ and assign the sign
$s_u$ to each edge $\{v,u\}$ with $u\in U_v$, where
$\mathbf{s}=(s_u)_{u\in U_v}$.
Define $\mathbf{z}\in\R^K$ by
\[
 z_u=
 \begin{cases}
  s_u,&u\in U_v,\\
  0,&u\in K\setminus U_v.
 \end{cases}
\]
Ordering $v$ after the vertices of $K$, the constraint matrix
after insertion is
\[
 \mathbf{Q}_{K\cup\{v\}}
 =
 \begin{pmatrix}
  \mathbf{Q}_K&-\mathbf{z}/\rho_0\\
  -\mathbf{z}^{\mathsf T}/\rho_0&1
 \end{pmatrix}.
\]
We show that both its positive definiteness and its determinant
are determined by the scalar $1-q_v(\mathbf{s})$.

Suppose first that $K\ne\varnothing$.
For $\mathbf{x}\in\R^K$ and $a\in\R$, set
\[
 \mathbf{y}
 =
 \mathbf{x}-\frac{a}{\rho_0}\mathbf{Q}_K^{-1}\mathbf{z}.
\]
By the definition of $q_v$ and the choice of $\mathbf{z}$,
\[
 \frac{1}{\rho_0^2}
 \mathbf{z}^{\mathsf T}\mathbf{Q}_K^{-1}\mathbf{z}
 =q_v(\mathbf{s}).
\]
Expanding $\mathbf{y}^{\mathsf T}\mathbf{Q}_K\mathbf{y}$
using the definition of $\mathbf{y}$ gives
\begin{align*}
 \mathbf{y}^{\mathsf T}\mathbf{Q}_K\mathbf{y}
 &=
 \left(\mathbf{x}-\frac{a}{\rho_0}\mathbf{Q}_K^{-1}\mathbf{z}\right)^{\mathsf T}
 \mathbf{Q}_K
 \left(\mathbf{x}-\frac{a}{\rho_0}\mathbf{Q}_K^{-1}\mathbf{z}\right)\\
 &=
 \mathbf{x}^{\mathsf T}\mathbf{Q}_K\mathbf{x}
 -\frac{2a}{\rho_0}\mathbf{z}^{\mathsf T}\mathbf{x}
 +\frac{a^2}{\rho_0^2}
       \mathbf{z}^{\mathsf T}\mathbf{Q}_K^{-1}\mathbf{z}\\
 &=
 \mathbf{x}^{\mathsf T}\mathbf{Q}_K\mathbf{x}
 -\frac{2a}{\rho_0}\mathbf{z}^{\mathsf T}\mathbf{x}
 +a^2q_v(\mathbf{s}).
\end{align*}
Thus the quadratic form of $\mathbf{Q}_{K\cup\{v\}}$ satisfies
\begin{equation}\label{eq:erci}
 \begin{pmatrix}\mathbf{x}\\a\end{pmatrix}^{\mathsf T}
 \mathbf{Q}_{K\cup\{v\}}
 \begin{pmatrix}\mathbf{x}\\a\end{pmatrix}
 =
 \mathbf{y}^{\mathsf T}\mathbf{Q}_K\mathbf{y}
 +a^2\bigl(1-q_v(\mathbf{s})\bigr).
\end{equation}
The change of variables
$(\mathbf{x},a)\mapsto(\mathbf{y},a)$ is invertible, with inverse
$\mathbf{x}=\mathbf{y}+(a/\rho_0)\mathbf{Q}_K^{-1}\mathbf{z}$.
Since $\mathbf{Q}_K\succ0$, the right-hand side of (\ref{eq:erci}) is positive
for every nonzero $(\mathbf{y},a)$ when $1-q_v(\mathbf{s})>0$.
Conversely, choosing $\mathbf{y}=\mathbf{0}$ and $a=1$
shows that positive definiteness requires $1-q_v(\mathbf{s})>0$.
Therefore insertion preserves admissibility if and only if
$q_v(\mathbf{s})<1$.

To obtain the determinant formula, write the change of variables as
\[
 \begin{pmatrix}\mathbf{y}\\a\end{pmatrix}
 =
 \mathbf{T}\begin{pmatrix}\mathbf{x}\\a\end{pmatrix},
 \qquad
 \mathbf{T}=
 \begin{pmatrix}
  \mathbf{I}_K&-\mathbf{Q}_K^{-1}\mathbf{z}/\rho_0\\
  \mathbf{0}^{\mathsf T}&1
 \end{pmatrix}.
\]
The matrix $\mathbf{T}$ is upper triangular with every diagonal
entry equal to one, so $\det\mathbf{T}=1$.
The quadratic-form identity  (\ref{eq:erci}) is therefore equivalent to
\[
 \mathbf{Q}_{K\cup\{v\}}
 =
 \mathbf{T}^{\mathsf T}
 \begin{pmatrix}
  \mathbf{Q}_K&\mathbf{0}\\
  \mathbf{0}^{\mathsf T}&1-q_v(\mathbf{s})
 \end{pmatrix}
 \mathbf{T}.
\]
Taking determinants and using $\det\mathbf{T}=1$ gives
\[
 D_{K\cup\{v\}}=\det   \mathbf{Q}_{K\cup\{v\}}
 =(\det\mathbf{T})^2D_K\bigl(1-q_v(\mathbf{s})\bigr)
 =D_K\bigl(1-q_v(\mathbf{s})\bigr),
\]
which proves the insertion formula.
If $K=\varnothing$, then $q_v(\mathbf{s})=0$ and the new
constraint matrix is $(1)$, so both conclusions hold directly.

\smallskip
\noindent\emph{The determinant bound.}
Suppose $K\ne\varnothing$, and let
$\mu_1,\ldots,\mu_{|K|}$ be the eigenvalues of $\mathbf{Q}_K$.
Since $\mathbf{Q}_K\succ0$, all $\mu_i$ are positive.
The diagonal entries of $\mathbf{Q}_K$ are all one, so
\[
 \sum_{i=1}^{|K|}\mu_i=\operatorname{tr}\mathbf{Q}_K=|K|.
\]
The arithmetic--geometric mean inequality therefore gives
\[
 0<D_K=\det \mathbf{Q}_K=\prod_{i=1}^{|K|}\mu_i
 \le
 \left(\frac{1}{|K|}\sum_{i=1}^{|K|}\mu_i\right)^{|K|}
 =1.
\]
For $K=\varnothing$, the same bound follows from
$D_\varnothing=1$.
\end{proof}

\subsection{The greedy sign rule}

\begin{proof}[Proof of Lemma~\ref{lem:signs}]
To prove~\eqref{eq:qtrace}, we separate the diagonal and
off-diagonal terms of $q_v(\mathbf{s})$.
The diagonal terms are independent of $\mathbf{s}$.
We group the off-diagonal terms so that each group involves
one newly chosen sign and the preceding signs and
the greedy rule~\eqref{eq:greedy} makes every such group
nonpositive.

Recall that $U_v=N_G(v)\cap K$, and list its vertices in
the fixed order as $u_1,\ldots,u_\ell$, where $\ell=|U_v|$.
By the definition of $q_v$ in~\eqref{eq:qdef},
\[
 q_v(\mathbf{s})
 =
 \frac{1}{\rho_0^2}
 \sum_{i=1}^{\ell}\sum_{k=1}^{\ell}
 (\mathbf{Q}_K^{-1})_{u_i u_k}s_{u_i}s_{u_k}.
\]
Since $s_{u_i}^2=1$ and
$(\mathbf{Q}_K^{-1})_{u_i u_i}=\lambda_{u_i}$
by~\eqref{eq:leverage}, symmetry of $\mathbf{Q}_K^{-1}$ gives
\[
 q_v(\mathbf{s})
 =
 \frac{1}{\rho_0^2}
 \left(
  \sum_{i=1}^{\ell}\lambda_{u_i}
  +2\sum_{i=1}^{\ell}s_{u_i}
    \sum_{k=1}^{i-1}
    (\mathbf{Q}_K^{-1})_{u_i u_k}s_{u_k}
 \right).
\]
When $s_{u_i}$ is chosen, the signs
$s_{u_1},\ldots,s_{u_{i-1}}$ are already fixed.
Rule~\eqref{eq:greedy} therefore ensures that
\[
 s_{u_i}
 \sum_{k=1}^{i-1}
 (\mathbf{Q}_K^{-1})_{u_i u_k}s_{u_k}
 \le0,
\]
since it selects $s_{u_i}=-1$ when the sum is positive and
$s_{u_i}=+1$ otherwise.
Later sign choices do not change this expression.
Consequently, the expansion of $q_v(\mathbf{s})$ above yields
\[
 q_v(\mathbf{s})
 \le \frac{1}{\rho_0^2}\sum_{i=1}^{\ell}\lambda_{u_i},
\]
which is~\eqref{eq:qtrace}.

For the arithmetic bound, assume that the inverse entries
appearing in~\eqref{eq:greedy} are given.
The $\ell$ sums used to choose the signs contain
\(
 \sum_{i=1}^{\ell}(i-1)={\ell(\ell-1)}/2
\)
terms in total, and each sum requires one comparison with zero.
Thus the rule uses $O(\ell^2)$ arithmetic operations and
comparisons.
If $U_v=\varnothing$, no signs are chosen and both sides
of~\eqref{eq:qtrace} are zero.
\end{proof}

\subsection{Positive weights at finite depth}
\begin{proof}[Proof of Lemma~\ref{lem:weights}]
The main step is to bound $C_2,\ldots,C_n$ by an explicit
comparison sequence. This bound also gives a positive lower
bound for the denominator defining $C_1$.
We first derive the asserted weight bounds from the comparison,
then prove the comparison, and finally analyze the exact
computation.

\smallskip
\noindent\emph{Deriving the weight bounds from a comparison.}
Define, within this proof,
\[
 b_h=2-\frac{1}{h+1}=\frac{2h+1}{h+1},
 \qquad 0\le h\le n-2.
\]
We claim that the recurrence defines $C_{n-h}$ and that
\begin{equation}\label{eq:weightcomparison}
 1\le\frac{C_{n-h}}{\gamma_0}\le b_h<2
 \qquad(0\le h\le n-2).
\end{equation}
This claim immediately gives
$\gamma_0\le C_{\jmath_0}<2\gamma_0$ for
$2\le\jmath_0\le n$.

To obtain the bound for $C_1$, recall that
$\gamma_0=1+1/(64n^2)$, so
\begin{equation}\label{eq:gamma2}
 \gamma_0^2
 =1+\frac{1}{32n^2}+\frac{1}{4096n^4}
 \le1+\frac{1}{16n^2}.
\end{equation}
Using $C_2<2\gamma_0$, $\rho_0^2=4(\Delta-1)$, and
$\Delta\ge3$, we obtain
\begin{align*}
 1-\frac{\Delta\gamma_0C_2}{\rho_0^2}
 &>1-\frac{\Delta}{2(\Delta-1)}\gamma_0^2\\
 &\ge1-\frac34\left(1+\frac{1}{16n^2}\right)\\
 &\ge\frac{13}{64}>\frac15.
\end{align*}
The denominator on the left is also at most one.
Thus~\eqref{eq:Croot} defines $C_1$ and gives
$\gamma_0\le C_1<5\gamma_0$.
It remains to prove~\eqref{eq:weightcomparison} to complete
the positivity and size bounds.

\smallskip
\noindent\emph{Proof of the comparison bound.}
We use induction on $h$.
For $h=0$, \eqref{eq:Cterminal} gives
$C_n/\gamma_0=1=b_0$.
Suppose that~\eqref{eq:weightcomparison} holds for
some $0\le h<n-2$.
By~\eqref{eq:Cinternal}, the next weight is
\[
 C_{n-h-1}
 =\frac{\gamma_0}{1-\gamma_0C_{n-h}/4}.
\]
To prove that this weight is well defined and satisfies
the next comparison bound, it suffices to show
\[
 \frac{1}{b_{h+1}}
 <1-\frac{\gamma_0C_{n-h}}4\le1.
\]
Indeed, the definition of $b_h$ gives
\begin{equation}\label{eq:gammabh}
 \frac{4(1-1/b_{h+1})}{b_h}
 =1+\frac{1}{(2h+1)(2h+3)}
 >1+\frac{1}{4n^2}
 >\gamma_0^2.
\end{equation}
Here the first strict inequality follows from $h<n-2$,
and the second follows from 
(\ref{eq:gamma2}).
Combining the induction hypothesis and (\ref{eq:gammabh}), we obtain that
\[
 1-\frac{\gamma_0C_{n-h}}4
 \ge1-\frac{\gamma_0^2b_h}4
 >\frac{1}{b_{h+1}}>0.
\]
Since $C_{n-h}>0$, this denominator is also at most one.
Consequently,
\[
 1\le
 \frac{C_{n-h-1}}{\gamma_0}
 =\frac{1}{1-\gamma_0C_{n-h}/4}
 <b_{h+1}.
\]
This proves~\eqref{eq:weightcomparison} at $h+1$ and
completes the induction. When $n=2$, only the initial
case $h=0$ is needed.

\smallskip
\smallskip
\noindent\emph{Exact computation and bit complexity.}
We compute the weights as unreduced fractions of positive
integers and bound the growth of their numerators and
denominators.

Write $\gamma_0=\alpha/\beta$, where
$\alpha=64n^2+1$ and $\beta=64n^2$.
Starting from $C_n=\alpha/\beta$, if
$C_{\jmath_0+1}=\xi/\zeta$, then
\eqref{eq:Cinternal} and~\eqref{eq:Croot} give
\[
 C_{\jmath_0}=
 \begin{cases}
 \displaystyle
 \frac{4\alpha\zeta}{4\beta\zeta-\alpha\xi},
       &2\le\jmath_0<n,\\[7pt]
 \displaystyle
 \frac{\alpha\rho_0^2\zeta}
      {\beta\rho_0^2\zeta-\Delta\alpha\xi},
       &\jmath_0=1.
 \end{cases}
\]
Here $\xi,\zeta$ are positive integers, and both displayed
denominators are positive by the preceding argument.
Since $\rho_0^2=4(\Delta-1)$ is an integer, each step uses
only a fixed number of integer multiplications and subtractions.

The coefficients multiplying $\xi$ and $\zeta$ have
$O(\log(n+1)+\log(\Delta+1))$ bits.
Thus each step increases the maximum numerator and
denominator bit length by at most this amount.
After at most $n-1$ steps, all stored and intermediate
integers have
\(
 O\!(n\bigl(\log(n+1)+\log(\Delta+1)\bigr))
\)
bits.
Ordinary binary multiplication has quadratic cost in the
maximum operand bit length, and subtraction has linear cost.
Summing over the $n-1$ steps therefore gives
\[
 O\!\left(
 n^3\bigl(\log(n+1)+\log(\Delta+1)\bigr)^2
 \right)
\]
bit operations, as claimed.
\end{proof}

\subsection{Exact arithmetic and the local bit bound}

\begin{proof}[Proof of Proposition~\ref{prop:arithmetic}]
We show how to compute the entries
$(\mathbf{Q}_K^{-1})_{uw}$ for $u,w\in U_v$, where
$U_v=N_G(v)\cap K$, and then use them to perform the deletion
test and the greedy sign selection.
The key observation is that these entries can be obtained
from the inverse of an integer positive definite matrix,
even when $\rho_0$ is irrational.
We first describe this reduction and the resulting local
computations. 
The rational Gaussian elimination argument establishing
the determinant bit bound is given at the end of the proof.

If $U_v=\varnothing$, there are no inverse entries to compute,
no neighbors to test for deletion, and no signs to choose.
We may therefore assume that $U_v\ne\varnothing$.

\smallskip
\noindent\emph{An integer matrix for the required inverse entries.}
Choose a bipartition $V=L\sqcup R$, where $\sqcup$ denotes
disjoint union, and label its two classes so that $v\in R$.
Then every neighbor of $v$ belongs to $L$, and hence
$U_v\subseteq L\cap K$.
Let $\mathbf{B}$ have rows indexed by $L\cap K$ and columns
indexed by $R\cap K$, with entries
\[
 B_{xy}=
 \begin{cases}
  \sigma(\{x,y\}),&\{x,y\}\in E,\\
  0,&\{x,y\}\notin E,
 \end{cases}
 \qquad x\in L\cap K,\quad y\in R\cap K.
\]
Thus every entry of $\mathbf{B}$ belongs to $\{-1,0,1\}$.
Ordering the vertices of $L\cap K$ before those of $R\cap K$
gives
\begin{equation}\label{eq:Qk}
 \mathbf{Q}_K=
 \begin{pmatrix}
  \mathbf{I}_{L\cap K}&-\mathbf{B}/\rho_0\\
  -\mathbf{B}^{\mathsf T}/\rho_0&\mathbf{I}_{R\cap K}
 \end{pmatrix}.
\end{equation}
Since $\mathbf{Q}_K\succ0$, the completion-of-squares argument
used in the proof of Lemma~\ref{lem:determinants} shows that
\[
 \mathbf{I}_{L\cap K}
 -\frac{\mathbf{B}\mathbf{B}^{\mathsf T}}{\rho_0^2}
 \succ0.
\]
Consequently, the matrix
\[
 \mathbf{M}
 =
 \rho_0^2\mathbf{I}_{L\cap K}
 -\mathbf{B}\mathbf{B}^{\mathsf T}
\]
is positive definite. Its entries are integers because
$\rho_0^2=4(\Delta-1)$ is an integer.

The inverse block containing all required entries satisfies
\begin{equation}\label{eq:rationalinverse}
 (\mathbf{Q}_K^{-1})[L\cap K,L\cap K]
 =
 \rho_0^2\mathbf{M}^{-1}.
\end{equation}
To verify this identity, take any
$\mathbf{z}\in\R^{L\cap K}$ and solve
\[
 \mathbf{Q}_K
 \begin{pmatrix}\mathbf{x}\\ \mathbf{y}\end{pmatrix}
 =
 \begin{pmatrix}\mathbf{z}\\ \mathbf{0}\end{pmatrix},
 \qquad
 \mathbf{x}\in\R^{L\cap K},\quad
 \mathbf{y}\in\R^{R\cap K}.
\]
According to (\ref{eq:Qk}), the second block equation gives
$\mathbf{y}=\mathbf{B}^{\mathsf T}\mathbf{x}/\rho_0$.
Substituting this expression into the first block equation
yields
\[
 \frac{\mathbf{M}}{\rho_0^2}\mathbf{x}=\mathbf{z},
 \qquad\text{so}\qquad
 \mathbf{x}=\rho_0^2\mathbf{M}^{-1}\mathbf{z}.
\]
This proves~\eqref{eq:rationalinverse}.
In particular, the inverse entries needed by the algorithm
are rational.

Each entry of $\mathbf{B}\mathbf{B}^{\mathsf T}$ is a sum
of at most $n$ numbers in $\{-1,0,1\}$.
The entries of $\mathbf{M}$ therefore have absolute value
at most $\rho_0^2+n$.
Forming $\mathbf{M}$ requires $O(n^3)$ integer additions
and multiplications. The integers involved, including
intermediate results, have absolute value at most
$\rho_0^2+n=4(\Delta-1)+n$.
Their binary representations therefore require
$O(\log(n+1)+\log(\Delta+1))$ bits each.

\smallskip
\noindent\emph{Recovering the inverse entries from determinants.}
We compute the entries $(\mathbf{M}^{-1})_{ab}$ with
$a,b\in U_v$ by evaluating $O(n^2)$ determinants of integer
matrices. Each entry is represented as a fraction with
an integer numerator and the same positive denominator
$2\det\mathbf{M}$.
 We first give the recovery formulas and their
bit cost.

We use the following computation bound, proved in the final
part of this proof. If $\mathbf{H}$ is a positive definite
integer matrix of order at most $n$ whose entries have
absolute value at most $\rho_0^2+n+1$, then $\det\mathbf{H}$
can be computed in
\begin{equation}\label{eq:detcost}
 O\!\left(n^5
   \bigl(\log(n+1)+\log(\Delta+1)\bigr)^2\right)
\end{equation}
bit operations. The determinant $\det\mathbf{H}$ and all
intermediate integers in this computation have
\(
 O\!\left(n\bigl(\log(n+1)+\log(\Delta+1)\bigr)\right)
\)
bits.

Write $d=\det\mathbf{M}>0$.
For each $a\in U_v$, let $\mathbf{e}_a\in\R^{L\cap K}$
have coordinate one at $a$ and zero elsewhere.
Since $\mathbf{M}$ is invertible and $d=\det\mathbf{M}$,
the matrix determinant lemma
\cite[Example~1.3.24]{HJ13} gives
\begin{equation}\label{eq:rankone}
 \det(\mathbf{M}+\mathbf{w}\mathbf{w}^{\mathsf T})
 =
 d\bigl(1+\mathbf{w}^{\mathsf T}
                 \mathbf{M}^{-1}\mathbf{w}\bigr)
\end{equation}
for every integer vector $\mathbf{w}$ indexed by $L\cap K$.

Define
\[
 r_a=\det(\mathbf{M}+\mathbf{e}_a\mathbf{e}_a^{\mathsf T})-d
 \qquad(a\in U_v).
\]
Taking $\mathbf{w}=\mathbf{e}_a$ in~\eqref{eq:rankone} gives
\[
 (\mathbf{M}^{-1})_{aa}=\frac{r_a}{d}.
\]
For $a,b\in U_v$ with $a<b$ in the fixed vertex order,
taking $\mathbf{w}=\mathbf{e}_a+\mathbf{e}_b$ gives
\[
 (\mathbf{M}^{-1})_{ab}
 =
 \frac{
  \det\!\left(
   \mathbf{M}+(\mathbf{e}_a+\mathbf{e}_b)
              (\mathbf{e}_a+\mathbf{e}_b)^{\mathsf T}
  \right)-d-r_a-r_b
 }{2d}.
\]
Here symmetry of $\mathbf{M}^{-1}$ accounts for the factor
$2$ in the denominator.

These formulas require determinants only of $\mathbf{M}$
and the matrices
\[
 \mathbf{M}+\mathbf{e}_a\mathbf{e}_a^{\mathsf T},
 \qquad
 \mathbf{M}+(\mathbf{e}_a+\mathbf{e}_b)
            (\mathbf{e}_a+\mathbf{e}_b)^{\mathsf T}.
\]
For $a,b\in U_v$ with $a\ne b$, the matrices added to
$\mathbf{M}$ are $\mathbf{e}_a\mathbf{e}_a^{\mathsf T}$
and $(\mathbf{e}_a+\mathbf{e}_b)
(\mathbf{e}_a+\mathbf{e}_b)^{\mathsf T}$.
Both are positive semidefinite, and every entry of either
matrix is zero or one.
 Since $\mathbf{M}\succ0$ and its integer
entries have absolute value at most $\rho_0^2+n$, all
matrices in these determinant computations 
satisfy the hypotheses of the determinant bound~\eqref{eq:detcost}.

We compute one determinant for $\mathbf{M}$, one for each
$a\in U_v$ to recover a diagonal entry, and one for each
unordered pair $\{a,b\}\subseteq U_v$ to recover an off-diagonal
entry. Thus the total is
$1+|U_v|+\binom{|U_v|}{2}=O(n^2)$.
The~\eqref{eq:detcost} gives a total cost of
\[
 O\!\left(n^7
   \bigl(\log(n+1)+\log(\Delta+1)\bigr)^2\right)
\]
bit operations.
The recovery formulas represent every required entry
$(\mathbf{M}^{-1})_{ab}$ over the common denominator $2d$.
By~\eqref{eq:rationalinverse},
\[
 (\mathbf{Q}_K^{-1})_{ab}
 =\rho_0^2(\mathbf{M}^{-1})_{ab}
 \qquad(a,b\in U_v).
\]
Since $\rho_0^2$ is an integer, the same denominator $2d$
also represents all required entries of $\mathbf{Q}_K^{-1}$.
Each numerator is obtained from a fixed number of determinant
additions or subtractions, followed by multiplication by
$\rho_0^2$. The determinant bit bounds therefore give
\[
 O\!\left(n\bigl(\log(n+1)+\log(\Delta+1)\bigr)\right)
\]
bits for every numerator and for $2d$.
The additional arithmetic used to form these ratios is
covered by the total cost above.

\smallskip
\noindent\emph{The deletion test and the greedy sign choices.}
The diagonal entries just computed give
$\lambda_u=(\mathbf{Q}_K^{-1})_{uu}$ for every $u\in U_v$.
When $\jmath_0<n$, Lemma~\ref{lem:weights} supplies a positive
integer numerator and denominator for
$\gamma_0C_{\jmath_0+1}$, with the same asymptotic bit bound.
Thus each test
\[
 \lambda_u\ge\gamma_0C_{\jmath_0+1}
\]
can be performed exactly by cross-multiplication.
Both denominators are positive, so this multiplication
preserves the direction of the inequality.

For the greedy rule, each sum in~\eqref{eq:greedy} consists
of at most $n$ inverse entries, each multiplied by a previously
chosen sign.
Using their common denominator $2d$, we evaluate the sum
by adding or subtracting its integer numerators.
The numerator of the sum requires at most
$O(\log(n+1))$ additional bits.
Because $2d>0$, the sign of the sum is exactly the sign
of this numerator.

All greedy sums together require $O(n^2)$ integer additions
or subtractions, and there are $O(n)$ sign tests and deletion
comparisons.
Their total cost, as well as the cost of forming $\mathbf{M}$,
is bounded by the preceding bit-complexity estimate.
It remains to justify the determinant computation bound
used in this argument.

\smallskip
\noindent\emph{The determinant computation bound.}

We compute $\det\mathbf{H}$ by ordinary Gaussian elimination
over the rational numbers, reducing every fraction to lowest
terms. Since $\mathbf{H}$ is positive definite, all its
leading principal minors are positive, so elimination
requires no row exchanges. The determinant is the product
of the pivots, and the computation uses $O(n^3)$ rational
arithmetic operations. We bound the bit lengths of the
fractions occurring in this computation.

Let $m\le n$ be the order of $\mathbf{H}$. After eliminating
the first $h$ coordinates, write $s_{ij}^{(h)}$ for the
remaining entry at position $(i,j)$, where $i,j>h$.
Set $I_h:=\{1,\ldots,h\}$. The block determinant formula gives
\[
 s_{ij}^{(h)}
 =
 \frac{
  \det\mathbf{H}[I_h\cup\{i\},I_h\cup\{j\}]
 }{
  \det\mathbf{H}[I_h,I_h]
 }.
\]
Here $\mathbf{H}[I,J]$ denotes the submatrix with rows in $I$
and columns in $J$, both in increasing order, and the empty
determinant is one. Indeed, applying the first $h$
elimination steps to the submatrix in the numerator leaves
$s_{ij}^{(h)}$ as its last diagonal entry; the product of
the preceding pivots is $\det\mathbf{H}[I_h,I_h]$.

By Hadamard's inequality, every minor of order $k\le m$
has absolute value at most
\(
 k^{k/2}(\rho_0^2+n+1)^k.
\)
Since $\rho_0^2=4(\Delta-1)$, the numerator and denominator
of every reduced entry therefore have
\(
 O(
 n\bigl(\log(n+1)+\log(\Delta+1)\bigr))
\)
bits. Each elimination update forms products, a quotient,
and a difference of a fixed number of such fractions,
so its temporary integers satisfy the same asymptotic
bit bound. Accumulating the pivot product also satisfies
this bound, since every partial product is a leading
principal minor of $\mathbf{H}$.

Integer multiplication, division, and fraction reduction
by the Euclidean algorithm take at most quadratic time
in the bit lengths of their operands;
see~\cite[Theorems~3.3 and~4.2]{Shoup08}.
Thus each rational arithmetic operation costs
\[
 O\!\left(n^2
   \bigl(\log(n+1)+\log(\Delta+1)\bigr)^2\right)
\]
bit operations. Together with the $O(n^3)$ operation count,
this proves~\eqref{eq:detcost}.

\end{proof}

\section{Conclusion}\label{sec:conc}

We have given a deterministic polynomial-time algorithm that
assigns signs to the edges of any prescribed finite simple
bipartite graph of maximum degree at most an integer $\Delta\ge3$,
so that the signed adjacency matrix has operator norm strictly
less than $2\sqrt{\Delta-1}$.
The graph need not be regular or connected.
The algorithm builds on the recursive repair framework
of Jadbabaie, Saberi, and Sra~\cite{JSS26}, replacing random
choices by deterministic sign-selection and deletion rules.

The bit-complexity bound in Theorem~\ref{thm:main} is not
intended to be optimal; its purpose is to establish deterministic
polynomial running time. In the bit model, the cost of exact
rational arithmetic accounts for the binary lengths of the
integers involved.
In the unit-cost arithmetic model, each exact rational addition,
subtraction, multiplication, division, and comparison has unit
cost, independently of the operand bit lengths.
Under this model, the algorithm admits an $O(n^6)$ implementation.
Indeed, forming the rational matrix described in the proof of
Proposition~\ref{prop:arithmetic} and computing its inverse
by Gaussian elimination takes $O(n^3)$ arithmetic operations
per update. Computing the inverse entries together avoids
separate determinant evaluations.
The remaining work takes $O(n^2)$ operations per update,
and initializing the weights takes $O(n)$ operations.
Combining these costs with the $O(n^3)$ update bound
in~\eqref{eq:count} gives the stated arithmetic bound,
which is also not intended to be optimal.

A natural direction is to develop a similar deterministic
polynomial-time algorithm for prescribed nonbipartite
Ramanujan base graphs.
The first question is whether every connected nonbipartite
$d$-regular Ramanujan graph, with $d\ge3$, admits an edge
signing $\sigma$ satisfying
\(
 \norm{\mathbf{A}_\sigma}\le2\sqrt{d-1}.
\)
Such a signing would produce a Ramanujan $2$-lift of the
prescribed base graph.
The counterexample in~\cite{Xu26} is not Ramanujan and
therefore does not settle this restricted existence question.
An extension of the present method would need to control
both ends of the signed spectrum without relying on
bipartite spectral symmetry.

\vspace{0.5cm}

{\bf Use of artificial intelligence.}
This work was developed through repeated interactions between
the author and ChatGPT. These exchanges contributed to the
design of the algorithm, the development of its proofs, and
the organization and revision of the manuscript.
The author has carefully reviewed all AI-generated material
and verified the mathematical arguments, revising or rewriting
the text as needed. The author takes full responsibility for
the content and correctness of the paper.

\end{document}